\patchcmd{\@maketitle}{\LARGE \@title}{\fontsize{16}{19.2}\selectfont\@title}{}{}\makeatother
\theoremstyle{plain}
\newtheorem{theorem}{Theorem}[section]
\newtheorem{lemma}[theorem]{Lemma}
\newtheorem{observation}[theorem]{Observation}
\newtheorem{corollary}[theorem]{Corollary}
\newcommand{\eps}{\varepsilon}
\newcommand{\T}{\mathcal{T}}
\newcommand{\N}{\mathbb{N}}
\newcommand{\R}{\mathbb{R}}
\newcommand{\Rr}{\mathcal{R}}
\newcommand{\G}{\mathcal{G}}
\newcommand{\I}{\mathcal{I}}
\DeclareMathOperator{\DG}{DG}
\DeclareMathOperator{\true}{true}
\DeclareMathOperator{\lab}{\ell}
\DeclareMathOperator{\node}{node}
\DeclareMathOperator{\port}{port}
\DeclareMathOperator{\id}{id}
\DeclareMathOperator{\diam}{diam}
\title{Routing in Unit Disk Graphs without Dynamic Headers}
\author{Wolfgang Mulzer\footnote{
   Supported by ERC StG 757609.
} }
\author{Max Willert}
\affil{Institut f\"ur Informatik, Freie Universit\"at Berlin, Germany\\
\texttt{\{mulzer,willerma\}@inf.fu-berlin.de}}
\date{}
\begin{document}
\maketitle

\begin{abstract}
Let $V\subset\R^2$ be a set of $n$ sites in the plane. 
The \emph{unit disk graph}
$\DG(V)$ of $V$ is the graph with vertex set $V$ in which 
two sites $v$ and $w$ are adjacent if
and only if their Euclidean distance is at most $1$.

We develop a \emph{compact routing scheme} $\Rr$ for $\DG(V)$. 
The routing scheme $\Rr$ preprocesses
$\DG(V)$ by assigning a \emph{label} $\lab(v)$ to
every site $v$ in $V$. After that, for any two sites $s$ and $t$, 
the scheme $\Rr$ must be able to route a packet from $s$ to $t$ as 
follows: given the label of a \emph{current vertex}
$r$ (initially, $r=s$) and the label of the target vertex $t$, the 
scheme determines a neighbor $r'$ of $r$. Then, the packet 
is forwarded to $r'$, and  the process continues until the packet 
reaches its desired target $t$.
The resulting path between the source $s$ and the 
target $t$ is called the
\emph{routing path} of $s$ and $t$. 
The \emph{stretch} of the routing scheme is the maximum ratio of 
the total Euclidean
length of the routing path and of the shortest path in $\DG(V)$,
between any two sites $s, t \in V$. 

We show that for any given $\eps>0$, 
we can construct a routing scheme for $\DG(V)$ with diameter $D$ that achieves
stretch $1+\eps$ and label size
$O(\log D\log^3n/\log\log n)$ (the constant in the $O$-Notation depends on $\eps$).
In the past, several routing schemes for unit disk graphs 
have been proposed.
Our scheme is the first one to achieve poly-logarithmic 
label size and arbitrarily small stretch without storing 
any additional information in the packet.
\end{abstract}

\section{Introduction}
The \emph{routing problem} is a well-known problem in distributed graph algorithms
\cite{PelegUp89,GiordanoSt04}. We are given a graph $G$ and want to preprocces
it by assigning \emph{labels} to each node of $G$ such that the following
task can be solved: a data packet is located at a source node and has to be routed
to a target node. A routing scheme should
have several properties. First, routing must be \emph{local}: a node
can only use the label of the target node as well as its own local 
information
to compute
a neighbor to which the packet is sent next. Second, the routing should be \emph{efficient}:
the ratio of the routed path and the shortest path --- the \emph{stretch factor} --- should
be close to $1$.
Finally, the routing scheme should be \emph{compact}: the size of the 
labels (in bits) must be
small.

In the literature, one can find many different techniques and models for routing. A common
tool is the use of \emph{routing tables}. A routing table is a
sequence of bits
stored in a node. Typically, routing tables contain more information about the topology
of the graph and are different from labels. In this article, we 
do not use routing tables, but store all the information in the labels.
Moreover, many routing schemes use additional \emph{headers}. The header
contains mutable information and is stored in the data packet. Thus, the header
moves
with the data packet through
the graph. The usage of an additional header makes it possible to implement
recursive routing strategies or to remember information from past 
positions of the packet.

Furthermore, the literature distinguishes two types of input models. 
In the
\emph{fixed-port model}, the given graph already has a complete
list of \emph{ports} for each node $v$, i.e., a fixed numbering 
of the neighbors of $v$ used to identify the next hop of the 
packet.
In particular, it is not possible to renumber the ports. In contrast,
the \emph{designer-port model} allows us to assign arbitrary
port numbers during the preprocessing, see \cite{FraigniaudGa01,ThorupZw01,fraigniaud2002space}. 
Below, we will briefly discuss the advantages and disadvantages of these two models.

A trivial solution to solve the routing problem is to store the complete shortest path
tree in every label. Then it is easy to route the data packets along a shortest path.
However, such a routing scheme is not compact. Moreover, Peleg and Upfal~\cite{PelegUp89}
proved that in general graphs, any routing scheme that achieves a constant stretch factor
must store a polynomial number of bits for each node.

Nevertheless, there is a rich collection of routing schemes for general graphs
\cite{AbrahamGa11,AwerbuchBNLiPe90,Chechik13,Cowen01,
EilamGaPe03,RodittyTo15,RodittyTo16}.
For example, the scheme by Roditty and Tov~\cite{RodittyTo16} 
uses labels of size $mn^{O(1/\sqrt{\log n})}$ and
routes a packet from $s$ to $t$ on a path of length 
$O\big(k\Delta+m^{1/k}\big)$, where $\Delta$ is the 
shortest path distance between $s$ and $t$, $k > 2$ is any 
fixed integer, $n$ is the number of nodes, and $m$ is the number of
edges. Their routing scheme needs headers of poly-logarithmic size.

The lower bound result by Peleg and Upfal~\cite{PelegUp89} shows that it is hopeless
to find efficient routing schemes for general graphs that are compact as well,
meaning that at most a poly-logarithmic number of bits in the labels/tables are
necessary.
Thus, it is natural to investigate special interesting graph classes and to develop compact and
efficient routing schemes for them. For example, it is possible to
route in trees along a shortest path by using a poly-logarithmic number of bits in the
label~\cite{FraigniaudGa01,SantoroKh85,ThorupZw01}. 
Moreover, in planar graphs, for any fixed $\eps > 0$, we can find a routing scheme
that achieves the stretch factor $1 + \eps$. Again, the number of bits for the
labels is poly-logarithmic~\cite{Thorup04}. The same holds for visibility graphs
of simple polygons~\cite{BanyassadyCKMRR17}. Moreover, see \cite{AbrahamGGM06} for
different routing compact routing schemes in networks with low doubling dimension.

Our graph class of interest comes from the study of mobile and wireless networks.
These networks are usually modeled as \emph{unit disk graphs} \cite{clark1990unit}. Nodes
in this network are points in the plane and two nodes are connected if their distance
is at most one. This is equivalent to a disk intersection graph in which all disks
have diameter one.
For unit disk graphs there are known routing schemes. The first routing scheme
is by Kaplan et al.~\cite{KaplanMuRoSe18} and uses the fixed-port
model. They present a
routing scheme with stretch
$1+\eps$ and routing table size $O(\log^2n\log^2D)$, where $D$ is the diameter
of the given unit disk graph. Their routing is recursive and needs an additional header
of size $O(\log n\log D)$. The second routing scheme is due to Yan, Xiang, and Dragan~\cite{yan2012compact}. They present a routing scheme with label size $O(\log^2n)$ and
show that a data packet routes along a path of length a most $5\Delta+13$, where $\Delta$
is the length of the optimal path. The designer-port model is used.

Here, we present the first compact routing scheme that is headerless and
achieves stretch $1+\eps$. We obtain label size $O(\log D\log^3n/\log\log n)$.
\footnote{The constant in the $O$-Notation depends on $\eps$.}
We use the fixed-port model. In the conclusion,
we will discuss how our scheme compares to the other schemes.

\section{Preliminaries}

We explain our graph theoretic notation and 
discuss how the routing scheme can access the
input graph. Then, we provide a precise definition 
of our notion of a routing scheme and give some 
background on unit disk graphs.

We are given a \emph{simple} and \emph{undirected}, 
graph $G = (V, E)$ with $n$ vertices.
The edges are weighted by a non-negative
weight function $w \colon E \rightarrow \R_0^+$.
We write $d_G(s,t)$ for the (weighted) \emph{shortest 
path distance} between the vertices $s, t \in V$ and 
we omit the subscript $G$ if it follows from the context.
Throughout the whole article we assume that the graph
is \emph{connected}.

\subparagraph*{Graph Access Model.}
Let $\Sigma = \{0, 1\}$, and 
$[m] = \{0, 1, \dots, m\}$, for $m \in \N$. 
We explain how the routing scheme may access
the input graph $G = (V, E)$.
Every vertex $v \in V$ has an identifier
$v_{\id} \in\Sigma^+$ of length
$|v_{\id}| = \lceil\log n \rceil$.
We use the \emph{fixed-port model} 
\cite{FraigniaudGa01,ThorupZw01,fraigniaud2002space}.
In this model the port numbers are
assigned arbitrarily.
The neighbors of a vertex $v \in V$ are accessed
through \emph{ports}. 
More precisely, 
there is a partial function
$\node: V \times [n - 1] \rightarrow V$,
that assigns to every vertex $v \in V$ and 
to every port number $p \in [n - 1]$ the 
neighbor $w = \node(v, p)$ that can be reached
through the port $p$ at vertex $v$. For simplicity, 
we set $\node(v, 0) = v$, for all $v\in V$.
In our algorithms, we use \emph{broadcast functions} 
$\beta_v: \Sigma^+ \rightarrow [n]$,
for every vertex $v \in V$.
It is defined as follows:
\[
\beta_v (w_{\id}) = \left\{
\begin{array}{ll}
p, & \textrm{if }\node(v, p) = w, \textrm{ and} \\
n, & \textrm{otherwise.} 
\end{array}
\right.
\]
The broadcast functions can be implemented with the
$\node$ functions as follows:
ask all neighbors of a node $v$ whether they 
have the identifier $w_{\id}$. If there is one, 
then this neighbor will answer on the corresponding port $p$. 
Otherwise, we output $n$.

Other authors also use the
\emph{designer-port model} \cite{FraigniaudGa01,ThorupZw01,yan2012compact}. In 
this model, the routing scheme can determine the 
assignment of port numbers to the incident edges of 
each vertex $v \in V$ during the preprocessing phase.
This additional power in the model can lead to more
efficient routing schemes \cite{FraigniaudGa01,ThorupZw01,yan2012compact,fraigniaud2002space}.
However, a routing scheme that uses the designer-port 
cannot easily be used as a building block for more
complicated routing schemes, since 
additional lookup tables become necessary in order to
store the assignments of the port numbers.

\subparagraph*{Routing Schemes.}
Let $\G$ be a graph class. A \emph{routing scheme} $\Rr$ for $\G$ 
consists of a family of \emph{labeling functions}
$\lab_G: V(G) \rightarrow \Sigma^+$, for each
$G \in \G$. The labeling function $\lab_G$ assigns 
a bit string $\lab_G(v)$ to every node $v$ of $G \in \G$. 
The label $\lab_G(v)$ serves as the address of the node 
$v \in V$ in $G$. 
In contrast to the identifier of a node, the label usually
contains the identifier, but some more information about the
topology of the graph $G$.
While the identifier is given as fixed input, the label is chosen
by the routing scheme during the preprocessing.
As before, we omit the index $G$ if the context is clear.
Furthermore, $\Rr$ has a \emph{routing function} 
$\sigma: \Sigma^+ \times \Sigma^+ \times \N^{\Sigma^+}\rightarrow \N$.
 The routing function $\sigma$ describes the
behavior of the routing scheme, as follows: assume
a data packet is located at a vertex $s \in V$ and must
be routed to a destination $t \in V$. Then, 
$\sigma(\lab(s),\lab(t),\beta_s)$ has to compute a port 
$p$ so that the next hop of the data packet is from 
$s$ to $\node(s, p)$. Now, let $v_0 = s$ and
$v_{i + 1} = \node(v_i, \sigma(\lab(v_i) ,\lab(t), \beta_{v_i}))$, 
for $i \geq 0$. The sequence $(v_i)_{i \in \N}$ is called
\emph{routing sequence}. The routing scheme $\Rr$ is \emph{correct}, for 
$G \in G$, if and only if for all distinct $s, t\in V(G)$, there
is a number $m(s, t) \in \N$ such that $v_{j} = t$, for all $j \geq m(s, t)$,
and $v_{j} \neq t$, for all  $j = 0, \dots, m(s, t) - 1$.
If $\Rr$ is correct for $G = (V, E)$,
then $\delta_G(s, t) = \sum_{i=1}^{m(s,t)}w(v_{i-1},v_i)$ is called
the \emph{routing length} between $s$ and $t$ (in $G$).
The \emph{stretch} of the routing scheme is the largest ratio 
$\delta_G(s, t)/d_G(s,t )$ over all distinct vertices $s, t \in V$.
The goal is to achieve a routing scheme that minimizes
the stretch factor as well as the number of bits stored in the labels.
Many routing schemes use additional \emph{headers} during the routing.
These headers as well as the target labels are stored in the data packet.
In contrast to the target label, the header might change while the packet
is routed through the graph. This gives additional power and makes it
possible to develop recursive routing schemes. However, we will not make
use of this technology.

\subparagraph*{Unit Disk Graphs.}
Our graph class of interest are the \emph{unit disk graphs}.
Let $V \subset \R^2$ be a set of $n$ points in the 
Euclidean plane. The unit disk graph $\DG(V)$ of $V$ has
vertex set $V$ and an edge between two vertices $v, w \in V$
if and only if the \emph{Euclidean distance} $|vw|$ is at most $1$,
see \cref{fig:unit-disk-graph}.
The weight of the edge $vw$ is $|vw|$.
Throughout, we will assume that $\DG(V)$ is connected, and 
we will use $D$ to denote the diameter
$\max_{u, v \in V} d(u, v)$ of $\DG(V)$. Clearly,
we have $D \leq n - 1$.

\begin{figure}
\begin{center}
\includegraphics[scale=1]{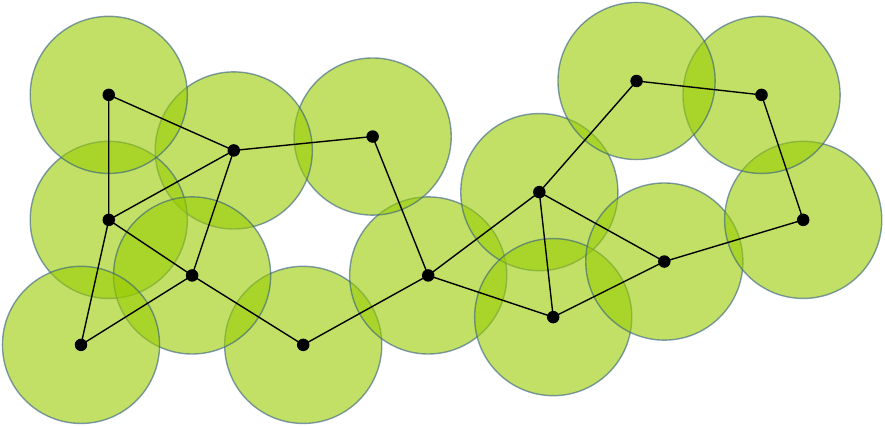}
\end{center}
\caption{The disks in the unit disk graph have diameter 1 and there is an edge between
two midpoints if and only if their corresponding disks intersect.}
\label[figure]{fig:unit-disk-graph}
\end{figure}

\section{Building Blocks}

In this section, we describe the building blocks
for our routing scheme. For this, we review some
simple routing schemes from the literature, and
we show how to obtain a new routing scheme for
unit disk graphs that achieves an 
\emph{additive} stretch. This later scheme 
is based on the data structure of Chan and Skrepetos.

\subsection{Simple Routing Schemes}

The first routing scheme is for trees. There are 
many different such schemes, based on similar ideas.
We would like to point out that some of these routing 
schemes can achieve label size $O(\log n)$, see\cite{FraigniaudGa01, ThorupZw01}.
However, 
these routing schemes work only in the designer-port model
and therefore are not useful as building blocks for 
more complex routing schemes,
especially if---as in our routing case---we need to 
be able to route in several subtrees of the input graph.\footnote{In 
fact, there is a lower bound that shows that label size $O(\log n)$
cannot be achieved in the fixed-port model \cite{fraigniaud2002space}.}
The following lemma is due to Fraigniaud and Gavoille \cite{FraigniaudGa01}
as well as Thorup and Zwick \cite{ThorupZw01}.

\begin{lemma}
\label[lemma]{lem:tree-routing}
Let $T$ be an $n$-vertex tree with arbitrary edge weights. 
There is a routing scheme for $T$ with
label size $O(\log^2 n/\log\log n)$ whose routing function 
$\sigma_\textup{tree}$ sends a data packet along
a shortest path, for any pair of vertices.
\end{lemma}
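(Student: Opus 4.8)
Since $T$ is a tree, the shortest path between any two vertices is the unique path in $T$, so the task reduces to forwarding each packet one step closer to $t$ along that tree path. I would root $T$ at an arbitrary vertex and compute a \emph{heavy-path decomposition}: at every internal vertex, the child whose subtree is largest is the \emph{heavy} child, and all other children are \emph{light}. Taking a light edge at least halves the subtree size, so every root-to-vertex path crosses $O(\log n)$ light edges, and the heavy edges partition $V(T)$ into heavy paths, each of which is traversed at most once on any descent. In parallel I would run a DFS and assign to each $v$ its pre-order interval $[a_v,b_v]$, with the property that $w$ is a descendant of $v$ iff $a_w\in[a_v,b_v]$; this lets a vertex decide, from the target's stored interval alone, whether the target lies in a given subtree.

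The label $\lab(v)$ would then store the interval $[a_v,b_v]$, the port to $v$'s parent, the port to $v$'s heavy child together with that child's interval, and the \emph{light depth} of $v$ (the number of light edges on the root-to-$v$ path, which is at most $\log n$). The routing function $\sigma_\textup{tree}(\lab(r),\lab(t),\beta_r)$ then acts as follows. If $a_t\notin[a_r,b_r]$, then $t$ is not below $r$, and we forward to the parent. Otherwise $t$ is a descendant; if $a_t$ lies in the stored interval of the heavy child we forward along the heavy edge; in both of these cases the required port is read directly from $\lab(r)$. The only remaining case is that $t$ lies in the subtree of some light child of $r$, and here we must recover the correct light port without storing all children at $r$. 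For this, $\lab(t)$ additionally records, for each light edge on its own root-to-$t$ path, a short tag identifying the light child entered; the current vertex $r$ reads off the tag at the index equal to its own light depth and converts it to a port through the broadcast function $\beta_r$.

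The straightforward version of this encoding stores, for each of the $O(\log n)$ light exits on the root-to-$t$ path, a full vertex identifier of $\log n$ bits, which already yields a correct scheme of label size $O(\log^2 n)$. The crux---and the place where the fixed-port model really bites---is shrinking this to $O(\log^2 n/\log\log n)$. I would exploit that if the successive light edges on a root-path are taken to children of \emph{light-ranks} $j_1,j_2,\dots$ (rank in decreasing order of subtree size), then $\prod_i j_i\le n$, so $\sum_i\log j_i\le\log n$: the branch choices along an entire path carry only $O(\log n)$ bits of entropy. The residual difficulty is purely one of encoding this information so that each intermediate vertex $r$ can, using only $\lab(r)$, $\lab(t)$, and $\beta_r$, reconstruct the single port it needs without any routing table. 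This is achieved by the balanced packing of rank-tags and interval data due to Fraigniaud and Gavoille~\cite{FraigniaudGa01} and Thorup and Zwick~\cite{ThorupZw01}, which spends $O(\log^2 n/\log\log n)$ bits in total and is what I would reproduce. I expect this last packing step to be the main obstacle, since the naive $\log n$-bits-per-level accounting must be replaced by an amortized argument that charges the convex cost $\sum_i\log j_i$ against the single budget $\log n$.
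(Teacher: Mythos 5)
The paper does not actually prove this lemma: it imports it verbatim from Fraigniaud--Gavoille \cite{FraigniaudGa01} and Thorup--Zwick \cite{ThorupZw01}, so your proposal has to be measured against those constructions. Your baseline is correct: heavy-path decomposition plus DFS intervals, with the target's label carrying the full identifier of the light child to enter at each of the $O(\log n)$ light ancestors and $\beta_r$ converting identifiers to ports, is a valid fixed-port scheme with labels of size $O(\log^2 n)$. The gap is precisely in the step you single out as the crux. The rank-entropy fact ($\prod_i j_i\le n$, hence $\sum_i\log j_i\le\log n$) is the engine of the \emph{designer-port} $O(\log n)$-bit tree schemes, where one is free to renumber ports so that the port leading to the $j$-th largest child \emph{is} $j$; there a rank literally is a port. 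In the fixed-port model a rank is useless to the current vertex $r$: port numbers are adversarial, $r$ may not store a rank-to-port table (that is exactly the kind of local routing table this paper forbids), and the broadcast function accepts only full identifiers $w_{\id}$, not short tags. Indeed, if your compression could be made to work, it would yield labels of size $O(\log n)$, contradicting the $\Omega(\log^2 n/\log\log n)$ fixed-port lower bound that the paper itself cites in a footnote \cite{fraigniaud2002space}. So ``reproducing the balanced packing of rank-tags'' does not describe what the cited papers do; the fixed-port mechanism is genuinely different.

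What they balance is coarser. Fix $b=\lceil\log n/\log\log n\rceil$ and call a child $c$ of $v$ \emph{big} if $|T_c|\geq|T_v|/b$, and \emph{small} otherwise. Every vertex has at most $b$ big children, and every root-to-leaf path crosses fewer than $\log_b n=O(\log n/\log\log n)$ small edges. Store in $\lab(v)$: the DFS interval of $v$, the port to its parent, its small depth, the (interval, port) pairs of \emph{all big children of $v$} --- that is $O(b\log n)=O(\log^2 n/\log\log n)$ bits --- and the explicit port numbers of the small edges on the root-to-$v$ path, indexed by small depth, which is another $O(\log n/\log\log n)\cdot O(\log n)$ bits. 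Routing at $r$ toward $t$: if $t$ lies outside $r$'s interval, take the parent port; if $t$ lies in the interval of a big child, the port is read from $\lab(r)$; otherwise $t$ descends through a small child of $r$, and the needed port sits in $\lab(t)$'s small-edge list at the position equal to $r$'s small depth. No renumbering and no broadcast function are needed, and the two contributions are balanced at $O(\log^2 n/\log\log n)$ --- this trade-off between per-vertex port tables for big children and per-target explicit ports for the few small edges, not the rank-entropy bound, is where the $\log\log n$ saving comes from.
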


The second routing scheme is efficient for unit disk graphs with 
small diameter. The idea of the scheme was first
described by Kaplan et al. \cite{KaplanMuRoSe18}. They use the following 
lemma, which is based on a method by Gao and Zhang \cite{gao2005well}.

\begin{lemma}
\label[lemma]{lem:arbitrary-density}
Let $\eps > 0$ and $\DG(V)$ be an $n$-vertex unit disk graph with 
diameter $D$. We can compute two sets
$R\subseteq Z\subseteq V$ with the following properties:
\begin{enumerate}[(i)]
\item $|R|\in O(D\eps^{-2})$ and $|Z|\in O(D\eps^{-3})$;
\item for every vertex $v\in V$, there is a 
  \emph{cluster vertex} $v' \in R$ with $d(v,v') \leq\eps$; and
\item for every $s, t\in R$, we have 
  $d(s,t)\leq d_Z(s,t)\leq(1+12\eps)d(s,t)+12\eps$, where $d_Z(s,t)$ 
  denotes the shortest path distance between $s$ and $t$ in $\DG(Z)$.
\end{enumerate}
\end{lemma}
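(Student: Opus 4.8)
The plan is to build both $R$ and $Z$ from a single grid at scale $\Theta(\eps)$ and to obtain property~(iii) by snapping a shortest $\DG(V)$-path onto the chosen representatives, following the grid/packing approach of Gao and Zhang. First I would record the geometric reduction that drives all the size bounds: since every edge of $\DG(V)$ has length at most $1$, we have $|uv| \le d(u,v)$ for all $u,v \in V$, so $d(u,v) \le D$ forces $|uv| \le D$, and hence $V$ lies inside an axis-parallel box of side $O(D)$. I overlay this box with a grid of cells of diameter $\eps$ and let $R$ contain one representative vertex from each non-empty cell (ensuring that $s$ and $t$ are their own representatives, which is consistent with property~(iii) ranging over $s,t \in R$). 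Property~(ii) is then immediate: a vertex $v$ and its representative $v'$ share a cell, so $|vv'| \le \eps \le 1$, the edge $vv'$ exists in $\DG(V)$, and therefore $d(v,v') \le \eps$. The bound on $|R|$ follows from a packing argument, bounding the number of non-empty cells by the number of cells meeting the bounding box.

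Next I would define $Z \supseteq R$. A plain net does not suffice for~(iii): replacing the vertices of a shortest path by cell representatives displaces each endpoint by up to $\eps$, so two consecutive representatives can be as far apart as $1 + 2\eps$ and need not be adjacent in $\DG(V)$, and an isolated near-unit (``bridge'') edge of the path may have no vertex near its interior to route across. To repair this I enrich each non-empty cell with a bounded family of directional representatives --- one vertex per angular sector of width $\Theta(\eps)$ around the cell, chosen to witness an adjacency into the corresponding neighboring region whenever one exists. This adds a factor $O(1/\eps)$ per cell and yields the claimed bound on $|Z|$. The inequality $d(s,t) \le d_Z(s,t)$ needs no work: $\DG(Z)$ is the subgraph of $\DG(V)$ induced on $Z$ with the same edge weights, so every $\DG(Z)$-path is a $\DG(V)$-path and deleting vertices cannot shorten distances.

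The core step is the upper bound in~(iii). Fix $s,t \in R$ and a shortest path $\pi = (p_0, \dots, p_k)$ in $\DG(V)$ with $p_0 = s$, $p_k = t$, and total length $d(s,t)$. I walk along $\pi$ and place breakpoints $b_0 = s, b_1, \dots, b_m = t$ so that the portion of $\pi$ between consecutive breakpoints has Euclidean displacement just below $1$; I then replace each $b_j$ by the $Z$-representative of its cell in the sector pointing toward $b_{j+1}$. The directional choice guarantees that consecutive replacements are still within Euclidean distance $1$ --- so each forms an edge of $\DG(Z)$ --- and that a bridge edge is crossed through the witnessing adjacency stored in $Z$. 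Summing the hop lengths, the Euclidean displacements telescope to at most $d(s,t)$, each of the $m = O(d(s,t)) + O(1)$ hops contributes an additive error of $O(\eps)$ from the cell diameter and the angular discretization, and collecting these slacks gives a routing length of at most $(1 + 12\eps)\,d(s,t) + 12\eps$, as required.

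The step I expect to be the main obstacle is exactly this last verification: certifying that the snapped path lies entirely inside $\DG(Z)$. Two difficulties compound here --- the up-to-$2\eps$ displacement introduced by snapping, which must not push a hop length past $1$, and the existence of long bridge edges whose crossing vertices must be present in $Z$ even though no vertex lies near the bridge's interior. Both are resolved by the directional representatives, but making this precise (and tracking the resulting $O(\eps)$ terms carefully enough to reach the stated constants $1 + 12\eps$ and $12\eps$) is where the real work lies; the size bounds, property~(ii), and the lower bound of~(iii) are comparatively routine.
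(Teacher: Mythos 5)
You should first know that the paper contains no proof of \cref{lem:arbitrary-density}: it is imported verbatim from Kaplan et al.~\cite{KaplanMuRoSe18}, who in turn build on Gao and Zhang~\cite{gao2005well}. So your proposal can only be measured against that cited construction, and in spirit your plan (an $\eps$-grid defining $R$, directional bridge witnesses defining $Z$, and snapping a shortest path onto $Z$) is exactly that construction. The genuine gap is in property~(i), and it is not cosmetic. Your own counting argument does not deliver what you claim: a box of side $O(D)$ meets $O(D^2\eps^{-2})$ cells of diameter $\eps$, not $O(D\eps^{-2})$, so you are off by a factor of $D$. Worse, no cleverer choice of $R$ can close this gap, because properties~(i) and~(ii) are jointly unsatisfiable with bounds linear in $D$: take $V=\{0,1,\dots,m\}^2$ to be the integer grid, whose unit disk graph is connected with diameter $2m$ and in which any two distinct vertices have graph distance at least $1$. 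For any $\eps<1$, property~(ii) then forces every vertex to be its own cluster vertex, so $|R|=n=(m+1)^2=\Theta(D^2)$, which is not $O(D\eps^{-2})$ for fixed $\eps$ and growing $D$. In other words, the statement as given has bounds that are too strong; what your grid argument (and the cited construction) actually yields is quadratic in $D$, e.g.\ $|R|=O(D^2\eps^{-2})$. This does not endanger the paper, which only invokes the lemma through \cref{lem:diam-routing} on subgraphs of diameter $O(1/\eps)$, where the correction merely worsens the $\eps$-dependence of the label size in \cref{lem:lab-size}; but a proof attempt must either prove the weaker bound or flag the discrepancy, and yours asserts the unattainable one.

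The second issue is that the upper bound in~(iii), which you yourself identify as the crux, remains a plan rather than a proof, and two of its steps need real arguments. First, the hop count: the clean way to make the per-hop $O(\eps)$ errors sum to $O(\eps)\,d(s,t)+O(\eps)$ is to note that a shortest path in a unit disk graph may be assumed to have the property that any two consecutive edges have total length greater than $1$ (otherwise shortcut the middle vertex), so it has at most $2\,d(s,t)+1$ edges; your breakpoints with ``displacement just below $1$'' do not by themselves give adjacency after snapping, as you concede. Second, the sector witnesses: a single stored witness per pair (cell, $\Theta(\eps)$-sector) must serve \emph{whichever} cell across a bridge edge the true shortest path happens to use, which requires an argument (for instance, that the points of a narrow sector at varying radii form a chain of unit-disk adjacencies), and both endpoints of each witness edge must be placed into $Z$, since otherwise the bridge cannot be crossed inside $\DG(Z)$ at all. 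Until these are carried out, both property~(iii) and the claimed bound $|Z|=O(|R|\,\eps^{-1})$ are unsubstantiated.
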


\begin{lemma}
\label[lemma]{lem:diam-routing}
Let $\DG(V)$ be an $n$-vertex unit disk graph with diameter $D$. 
Furthermore, let $0<\eps\leq 1$.
There is a routing scheme with label size 
$O\left(\eps^{-3}D\log n\right)$ 
whose routing function
$\sigma_\textup{diam}$ achieves stretch factor $1+64\eps$.
\end{lemma}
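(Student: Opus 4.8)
The plan is to apply \cref{lem:arbitrary-density} with the given $\eps$ to obtain the sparse sets $R\subseteq Z\subseteq V$, and then to exploit that $Z$ is \emph{small}, namely $|Z|\in O(\eps^{-3}D)$: since a label may use $O(\eps^{-3}D\log n)$ bits, every vertex of $Z$ can afford to store a complete next-hop routing table for the graph $\DG(Z)$. The routed path will follow the template $s\to s'\leadsto t'\to t$, where $v'\in R$ denotes the cluster vertex of $v$ guaranteed by property~(ii), the single hops $s\to s'$ and $t'\to t$ are legal edges because $d(v,v')\le\eps\le 1$ forces $|vv'|\le 1$, and the middle portion $s'\leadsto t'$ is a shortest path inside $\DG(Z)$. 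Property~(iii) then controls the length of this middle portion, while the two end hops contribute only an additive $O(\eps)$.

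Concretely, I would let $\lab(v)$ contain $v_{\id}$ together with the identifier $v'_{\id}$ of its cluster vertex; in addition, if $v\in Z$, I store a table $\tab_v$ that maps each relevant target identifier $z_{\id}$ (it suffices to take $z\in R$, but storing all $z\in Z$ is also within budget) to the port at $v$ of the first edge of a fixed shortest path from $v$ to $z$ in $\DG(Z)$. The dominant term is this table, of size $O(|Z|\log n)=O(\eps^{-3}D\log n)$, as required. The routing function $\sigma_\textup{diam}$, evaluated at the current vertex $r$ for target $t$, then proceeds without any header: if $t$ is a neighbor of $r$ (tested by $\beta_r(t_{\id})\neq n$), hop straight to $t$; otherwise, if $r\notin Z$ (detected by the absence of a table in $\lab(r)$), hop to the cluster vertex $r'$ via $\beta_r(r'_{\id})$; otherwise $r\in Z$, and I return $\tab_r(t'_{\id})$, the next port toward the target's cluster $t'$ inside $\DG(Z)$. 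The final hop $t'\to t$ is realized automatically by the first clause, since $t$ is adjacent to $t'$.

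For correctness and stretch, observe first that if $d(s,t)\le 1$ then $|st|\le d(s,t)\le 1$, so $s$ and $t$ are adjacent and the packet is delivered in one optimal hop. Otherwise $d(s,t)>1$, and the packet performs at most one initial hop to $s'$, then follows shortest paths in $\DG(Z)$ toward $t'$ (the stored next-hop trees are consistent, so this terminates at $t'$), and finishes with the hop $t'\to t$. Summing the contributions gives $\delta(s,t)\le 2\eps+d_Z(s',t')$; applying property~(iii) together with the triangle inequality $d(s',t')\le d(s,t)+2\eps$ yields $\delta(s,t)\le(1+12\eps)d(s,t)+16\eps+24\eps^2$. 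Dividing by $d(s,t)>1$ and using $\eps\le 1$ bounds the stretch by $1+52\eps\le 1+64\eps$.

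The main obstacle is precisely the headerless requirement: the next hop may depend only on $\lab(r)$ and $\lab(t)$, so the scheme cannot remember whether it is still at the source or already moving inside $Z$. The fix is to make the in-$Z$ routing depend only on the target cluster $t'$, so that every vertex of $Z$ — whether source or intermediate vertex of the path — behaves identically and walks toward $t'$ along the stored shortest-path tree; only non-$Z$ sources trigger the cluster hop, which avoids any conflict. The one case needing separate care is a source $s\in Z\setminus R$, which is routed toward $t'$ directly rather than via $s'$; here I use that the edge $ss'$ lies in $\DG(Z)$ (again because $|ss'|\le\eps\le 1$), so $d_Z(s,t')\le\eps+d_Z(s',t')$ and the same additive bound survives.
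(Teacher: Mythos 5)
Your proposal is correct and reaches the lemma's bounds, and it rests on the same foundation as the paper's proof: \cref{lem:arbitrary-density}, the routing template $s\to s'\to\cdots\to t'\to t$, and literally the same stretch arithmetic (the bound $(1+12\eps)d(s,t)+16\eps+24\eps^2\le(1+52\eps)d(s,t)$ for $d(s,t)>1$). Where you genuinely differ is in how the headerless middle segment inside $\DG(Z)$ is organized. The paper stores routing information \emph{per source}: each $z\in R$ keeps the entire edge list of a shortest path tree $T_z$ of $\DG(Z)$ rooted at $z$, every clustered vertex carries a verbatim copy of its cluster's label, and only $R$-vertices ever consult a tree; an intermediate vertex of $Z\setminus R$ simply hops back to its cluster vertex. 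You store routing information \emph{per target}: every vertex of $Z$ (not just of $R$) keeps a next-hop port toward each possible target cluster, so after at most one initial cluster hop the packet never leaves $Z$, every $Z$-vertex behaves identically, and the potential $d_Z(\cdot,t')$ drops by exactly the traversed edge length at every step. Both variants fit in $O(\eps^{-3}D\log n)$ bits, but your target-indexed tables buy a cleaner correctness argument: in the paper's source-indexed scheme, a vertex $v\in Z\setminus R$ reached from a cluster vertex $c$ bounces to its own cluster, and nothing in \cref{lem:arbitrary-density} prevents that cluster from being $c$ itself; if in addition neither $c$ nor $v$ is adjacent to $t$, the memoryless routing would ping-pong between $c$ and $v$, a situation that the paper's assertions that correctness is ``straightforward'' and that $\delta(s',t')=d_Z(s',t')$ pass over silently, whereas your scheme excludes it by construction. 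Two details you should still pin down, both easy: realize the tables as parent ports in one fixed shortest path tree per target root, so that ties among shortest paths (e.g.\ zero-length edges between coincident sites) cannot create cycles; and note that an early hop to $t$ from some intermediate $u$ with $|ut|\le 1$ only shortens the route, since $|ut|\le|ut'|+|t't|\le d_Z(u,t')+|t't|$.
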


\begin{figure}
\begin{center}
\includegraphics[scale=1]{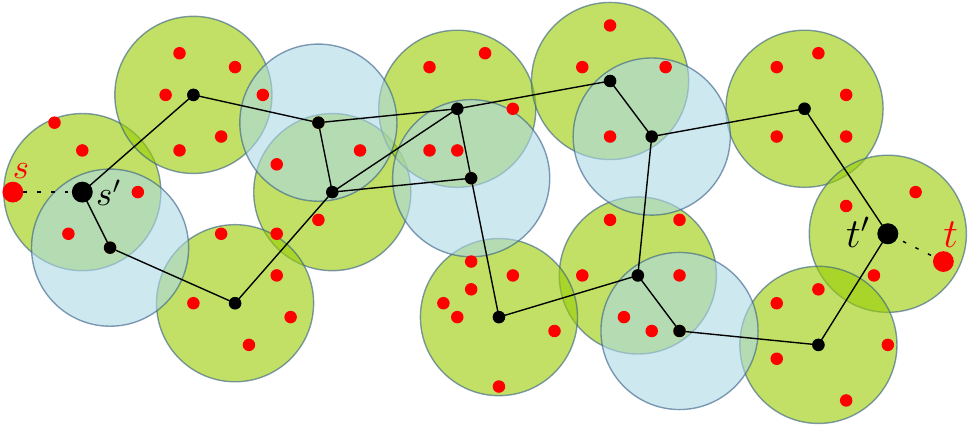}
\end{center}
\caption{The green disks represent the set $R$ from \cref{lem:arbitrary-density}.
The set $Z$ also contains the midpoints of the blue disks. The red and black dots represent the
whole input set $V$. To route from $s$ to $t$ we first take the hop to $s'$,
then route on the shortest path from $s'$ to $t'$ and finally take the hop to $t$.}
\label[figure]{fig:low-diameter-routing}
\end{figure}

\begin{proof}
The idea of the routing scheme is illustrated in \cref{fig:low-diameter-routing}.
First, we use \cref{lem:arbitrary-density} to find $R$ and $Z$.
Next, let $z\in R$. We use Dijkstra's algorithm
to compute a shortest path tree $T_z$ of the vertices $Z$ rooted at $z$.
We store the concatenation of $z_{\id}$ and
the pairs $(u_{\id}, v_{\id})$ in the label $\lab(z)$, for all edges $uv$ of 
$T_z$.
Next, let $v\in V\setminus R$. We pick an arbitrary cluster vertex
$v'\in R$ of $v$ with $d(v, v')\leq\eps$.
We store $v_{\id}$ and $\lab(v')$ in the label of $v$.
Finally, for every $v\in V$ we store a bit $b(v)$ in $\lab(v)$ that
is true if and only if $v\in R$.
Since $|Z| \in O(D\eps^{-3})$
by \cref{lem:arbitrary-density}, we immediately get 
$|\lab(v)| \in O\left(\eps^{-3}D\log\left(n\right)\right)$.

The routing function $\sigma_\text{diam}$ now works as follows: 
we are given $\lab(s)$, $\lab(t)$ and the broadcast-function
$\beta_s$. Using $\beta_s$, we check whether $s$ and $t$ are 
adjacent. If so, we use the port $\beta_s(t_{\id})$
to route the data packet. If not, we use the bit $b(v)$ to check
whether $s$ is a cluster vertex in $R$.
If $s$ is not a cluster vertex, $s$ and $s'$ are adjacent, 
since $\eps\leq 1$. We extract $s'_{\id}$ from $\lab(s')$ and route
the data packet via the port $\beta_s(s'_{\id})$ to $s'$. If $s$ is a cluster vertex,
$\sigma_\text{diam}$ extracts $t'_{\id }$ from $\lab(t)$, reconstructs
the shortest path tree $T_s$ for $s$, finds the next node $v$ on the unique path from
$s$ to $t'$ in $T_s$ and sends the data packet via the port $\beta_s(v_{\id})$ to $v$.

The correctness of the routing scheme is straightforward,
but we need to bound the stretch factor.
Let $s$ and $t$ be two vertices. If $d(s,t)\leq 1$,
they are neighbors and we route along a shortest path, 
i.e., $\delta(s,t)=d(s,t)$. Otherwise, let $d(s,t)> 1$ and
let $s'$ and $t'$ be their clusters ($s=s'$ and $t=t'$ is possible). 
Observe that $\delta(s',t')=d_Z(s',t')$. Hence, we can use
\cref{lem:arbitrary-density} to derive
\begin{align*}
\delta(s,t) &= |ss'| + \delta(s',t') + |t't| 
\leq \eps + d_Z(s',t') + \eps \leq (1+12\eps)d(s',t')+14\eps\\
&\leq (1+12\eps)(d(s,t)+2\eps)+14\eps = d(s,t)+12\eps d(s,t)+16\eps + 24\eps^2 \\
&\leq (1+28\eps+24\eps^2)d(s,t)\leq (1+52\eps)d(s,t).
\end{align*}
The last two inequalities hold because $d(s,t)>1$ and $\eps\leq 1$. 
Hence, $\delta(s,t)\leq(1+2^6\cdot\eps)d(s,t)$.\footnote{For notational 
ease, we restrict the explicit constants in our stretch bounds to  
powers of two.}
\end{proof}

\subsection{The Distance Oracle of Chan and Skrepetos}
\label{subsec:Chan-Skrepetos}

Our routing scheme is based on the recent approximate 
distance oracle for unit disk graphs by Chan and 
Skrepetos~\cite{ChanSk19}: we are given a
set $V \subset \R^2$ of $n$ points in the plane
and a parameter $\eps \geq D^{-1}$, where $D$ is the 
diameter of $\DG(V)$.
Chan and Skrepetos show how to compute in $O((1/\eps)^3 n \log^2 n)$
time a data structure of size $O((1/\eps)n \log n)$ that can 
answer \emph{approximate distance queries} in $\DG(V)$ in 
$O((1/\eps)\log n)$ time: given two vertices $s, t \in V$, compute 
a number $\theta \in \R$ with $d(s,t) \leq \theta  \leq d(s,t) + O(\eps D)$.
The main tool for this data structure is a suitable hierarchical 
decomposition of  $\DG(V)$.
More precisely, Chan and Skrepetos show that given $V$,
one can compute in $O(n \log n + (1/\eps) n)$ time a
\emph{decomposition tree} $\T$ for $\DG(V)$  with the
following properties.\footnote{The 
  reader familiar with the work of Chan and Skrepetos may 
  notice that we have slightly extended the notion of portals: 
  while Chan and Skrepetos define portals only for inner nodes, 
  we also define portals for the leaves. This does not change the 
essence of the decomposition, but makes the presentation more 
unified.}

\begin{itemize}
\item Every node $\mu$ of $\T$ is assigned two sets: 
  $\port(\mu)\subseteq V(\mu)\subseteq V$. The
  subgraph of $\DG(V)$ induced by $V(\mu)$ is connected and 
  the vertices in $\port(\mu)$ are called \emph{portals}.
\item If $\mu$ is the root, then $V(\mu)=V$.
\item If $\mu$ is an inner node with $k$ children 
  $\sigma_1,\dots,\sigma_k$, the sets 
  $\port(\mu),V(\sigma_1),\dots,V(\sigma_k)$ are pairwise disjoint,
  and we have $V(\sigma_i)\subseteq V(\mu)$, for $1\leq i\leq k$.
\item If $\mu$ is a leaf, then $V(\mu)=\port(\mu)$.
\item The height of $\T$ is in $O(\log n)$, and for every node 
  $\mu$ of $\T$, we have $|\port(\mu)| \in O(1/\eps)$.
\end{itemize}

To state the final (and most important) property of $\T$,
we first need to introduce some additional notation.
The properties of $\T$ so far imply that the portal sets of 
two different nodes in $\T$ are disjoint. For every portal $p$,
we let  $\mu(p)$ be the unique node in $\T$ with $p\in\port(\mu(p))$. 
Moreover, let $\mu$ be a node of $\T$ and 
$s, t \in V(\mu)$. We denote by $d_{\mu}(s,t)$  the shortest path
distance between $s$ and $t$ in the subgraph of $\DG(V)$ induced 
by $V(\mu)$. Now, the decomposition tree of Chan and Skrepetos 
has the property that for every pair of vertices $s, t \in V$,
if we set 
\[ 
  \theta(s,t) = \min_{\substack{p \text{ portal}\\s,t\in V(\mu(p))}}
  d_{\mu(p)}(s,p)+d_{\mu(p)}(p,t)
\]
then 
\begin{equation}
\label[equation]{equ:oracle-approx}
\theta(s,t)\leq d(s,t) + O(\eps D).
\end{equation}

\subsection{A Routing Scheme with Additive Stretch}
In the last section we presented a routing
scheme that is efficient for unit disk graphs with low diameter.
In this section we present a routing scheme that is efficient for 
unit disk graphs with large diameter.
Let $\DG(V)$ be an $n$-vertex unit disk graph with diameter $D$, 
and let $\eps>D^{-1}$. First, we set  $c=n\cdot(\eps D)^{-1}$ and 
define $x_c=\lfloor x\cdot c\rfloor$, for each $x\in\R^+_0$.
Next, we compute the 
decomposition tree $\T$, as explained
in \cref{subsec:Chan-Skrepetos}.

First, we describe the labels of the routing scheme. Let $v\in V$, 
and let $p$ be a portal with
$v\in V(\mu(p))$. We compute the shortest path tree $T_p$ of 
$V(\mu(p))$ rooted at $p$ and enumerate its
vertices in postorder. The postorder number of $v$ in $T_p$ is 
denoted by $r_p(v)$.
Next, the subtree of $T_p$ rooted at $v$ is called
$T_p(v)$ and we use $l_p(v)$ to denote the smallest postorder 
number in $T_p(v)$. Thus, since we enumerated the vertices in postorder,
a vertex $w\in V(\mu(p))$
is in the subtree $T_p(v)$ if and only if $r_p(w)\in[l_p(v), r_p(v)]$.
Finally, we apply the tree routing from \cref{lem:tree-routing} 
to $T_p$ and denote by $\lab_p(v)$ the corresponding label of $v$.
We store $(p_{\id} , d_{\mu(p)}(v,p)_c,l_p(v),r_p(v),\lab_p(v))$ in 
$\lab(v)$ and get the following lemma.

\begin{lemma}
\label[lemma]{lem:small-label}
For every vertex $v\in V$, 
we have $\vert\lab(v)\vert\in O\left(\dfrac{\log^3n}{\eps\log\log n}\right)$.
\end{lemma}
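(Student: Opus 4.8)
The plan is to count, for a fixed vertex $v$, how many tuples appear in $\lab(v)$, to bound the number of bits in each tuple, and then multiply the two estimates. The label stores one tuple $(p_{\id}, d_{\mu(p)}(v,p)_c, l_p(v), r_p(v), \lab_p(v))$ for every portal $p$ with $v \in V(\mu(p))$, so the heart of the argument is to control the number of such portals.

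First I would identify which portals contribute. The key structural observation is that the nodes $\mu$ of $\T$ with $v \in V(\mu)$ form a single path descending from the root. Indeed, $v \in V(\mu)$ for the root $\mu$, and at any inner node $\mu$ with $v\in V(\mu)$ and children $\sigma_1,\dots,\sigma_k$, the sets $\port(\mu),V(\sigma_1),\dots,V(\sigma_k)$ are pairwise disjoint with $V(\sigma_i)\subseteq V(\mu)$; hence either $v\in\port(\mu)$ (and the descent stops) or $v$ lies in the vertex set of exactly one child. Since $\T$ has height $O(\log n)$, this path visits $O(\log n)$ nodes. Every contributing portal $p$ has $\mu(p)$ on this path, and each node $\mu$ owns $|\port(\mu)| \in O(1/\eps)$ portals, so the number of tuples in $\lab(v)$ is $O(\log n)\cdot O(1/\eps) = O(\log n/\eps)$.

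Next I would bound the size of a single tuple. The identifier $p_{\id}$ has $\lceil\log n\rceil$ bits, and the postorder numbers $l_p(v),r_p(v)$ are integers in $[0,n]$, so $O(\log n)$ bits each. For the scaled distance, the subgraph induced by $V(\mu(p))$ is connected, so $d_{\mu(p)}(v,p)$ is the weight of a path of at most $n-1$ edges of weight at most $1$, giving $d_{\mu(p)}(v,p) < n$; since $\eps > D^{-1}$ forces $c = n(\eps D)^{-1} < n$, the stored value $\lfloor d_{\mu(p)}(v,p)\cdot c\rfloor < n^2$ also fits in $O(\log n)$ bits. The dominant term is the tree-routing label $\lab_p(v)$, which by \cref{lem:tree-routing} has size $O(\log^2 n/\log\log n)$. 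Hence one tuple uses $O(\log^2 n/\log\log n)$ bits.

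Multiplying the number of tuples by the per-tuple size yields $|\lab(v)|\in O(\log n/\eps)\cdot O(\log^2 n/\log\log n)=O(\log^3 n/(\eps\log\log n))$, as claimed. I expect the point that needs the most care to be the structural claim that the nodes containing $v$ form a single root-to-node path: this is exactly what keeps the number of relevant portals down to $O(\log n/\eps)$, rather than scaling with the total number of nodes of $\T$. The bit-length estimates, including the observation that $\eps > D^{-1}$ gives $c < n$ and $1/\eps < n$, are routine.
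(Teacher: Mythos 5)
Your proof is correct and follows essentially the same route as the paper's: count $O(\log n)$ nodes of $\T$ containing $v$ (via the tree height), multiply by $O(1/\eps)$ portals per node to get $O(\log n/\eps)$ tuples, and observe that each tuple is dominated by the $O(\log^2 n/\log\log n)$-bit tree-routing label from \cref{lem:tree-routing}. The paper states the node count and the bounds $c\leq n$, $d_{\mu(p)}(v,p)_c\leq n^2$ more tersely; you simply fill in the supporting details (the root-to-node path structure and the reason $c<n$), which the paper leaves implicit.
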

\begin{proof}
Since $\T$ has height $O(\log n)$, we know that $v$ is in 
$O(\log n)$ different sets $V(\mu)$. Moreover, for every node 
$\mu$, there are at most $O(1/\eps)$ portals. Thus, the label of 
$v$ contains $O(\eps^{-1}\cdot\log n)$ different entries. 
The value $d_{\mu(p)}(v,p)_c$ is a natural number, and
since $c\leq n$, we have 
\[ 
  d_{\mu(p)}(v,p)_c = \lfloor d_{\mu(p)}(v,p)\cdot c \rfloor\leq  n^2.
\]
Thus, we need $O(\log n)$ bits for the number $d_{\mu(p)}(v,p)_c$. 
Moreover, the identifier $p_{\id}$ as well as the postorder 
numbers stored in one entry only need $O(\log n)$ bits. Finally,
we apply \cref{lem:tree-routing} to conclude that one entry 
of the routing label has size $O(\log^2n/\log\log n)$.
The claim follows.
\end{proof}

\begin{figure}
\begin{center}
\includegraphics[scale=0.7]{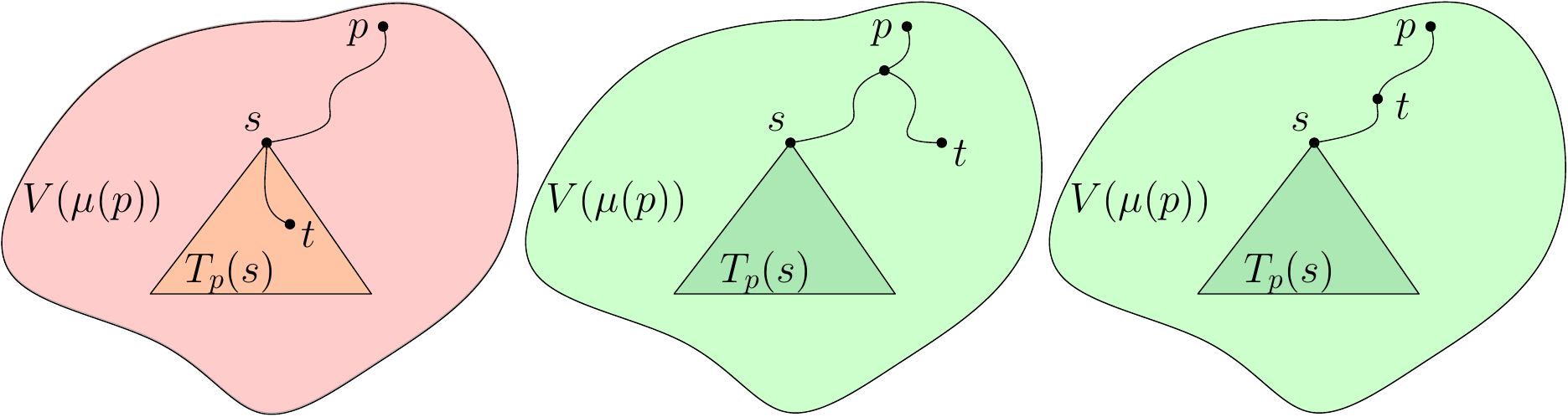}
\end{center}
\caption{Left: If $t$ is in $T_p(s)$, i.e., $\theta(s,t;p)=d_{\mu(p)}(t,p)-d_{\mu(p)}(p,s)$,
we route away from $p$. Middle and Right: If $t$ is not in $T_p(s)$, i.e.,
$\theta(s,t;p)=d_{\mu(p)}(t,p)+d_{\mu(p)}(p,s)$, we route towards $p$. The right picture
suggests to define $\theta(s,t;p)$ as $d_{\mu(p)}(s,p)-d_{\mu(p)}(t,p)$. This does
not influence the guarantees of our routing scheme but would lead to more cases.}
\label[figure]{fig:subtrees-cases}
\end{figure}

Next, we describe the routing function. We are given the labels 
$\lab(s)$ and $\lab(t)$ for the current vertex $s$ and the target 
vertex $t$. First, we identify all
portals $p$ with $s,t\in V(\mu(p))$. We can do this by identifying all vertices $p$
such that the entry $(p_{\id}, d_{\mu(p)}(s,p)_c,l_p(s),r_p(s),\lab_p(s))$ is in $\lab(s)$ and
the entry
$(p_{\id}, d_{\mu(p)}(t,p)_c,l_p(t),r_p(t),\lab_p(t))$ is in $\lab(t)$. Next, let
$\theta(s,t;p)=d_{\mu(p)}(t,p)+d_{\mu(p)}(p,s)$, if $t$ is not in the subtree
$T_p(s)$, and $\theta(s,t;p)=d_{\mu(p)}(t,p)-d_{\mu(p)}(p,s)$, otherwise;
see \cref{fig:subtrees-cases} for an illustration of the two cases.
Let $p_\text{opt}$ be the portal
that minimizes $\theta(s,t;p)$ among all portals $p$. Then, 
it is easy to see, that
$\theta(s,t;p_{opt})\leq \theta(s,t)$ (recall from \cref{subsec:Chan-Skrepetos} that 
$\theta(\cdot, \cdot)$ denotes the result of the distance 
oracle by Chan and Skrepetos). Hence, $\theta(s,t;p_{opt})$ is a good approximation for
the distance between $s$ and $t$. However, the routing function cannot compute
the optimal portal $p_\text{opt}$, since we do not have direct access to the 
real value $d_{\mu(p)}(s,p_\text{opt})$.
Instead, we use the values $d_{\mu(p)}(\cdot,p)_c$ to compute a near-optimal portal.
We define $\theta_c(s,t;p)=d_{\mu(p)}(t,p)_c+d_{\mu(p)}(p,s)_c$, if $t$ is not
in the subtree $T_p(s)$, and
$\theta_c(s,t;p)=d_{\mu(p)}(t,p)_c-d_{\mu(p)}(p,s)_c$, otherwise. Let $p_0$ be the
portal that lexicographically minimizes $(\theta_c(s,t;p),p_{\id})$, among all portals $p$.
We call $p_0$ the \emph{$s$-$t$-portal} and set $\theta_c(s,t)=\theta_c(s,t;p_0)$. Observe that
the $s$-$t$-portal can be computed by using only the labels of $s$ and $t$. The routing function now uses
the labels $\lab_{p_0}(s)$ and $\lab_{p_0}(t)$ as well as the broadcast function $\beta_s$
to compute the next vertex in $T_{p_0}$
and forwards the data packet to this vertex.

Finally, we have to show that the routing scheme is correct and routes along a short
(not necessarily shortest) path. For this, we first show that the routing process
terminates.

\begin{lemma}
\label[lemma]{lem:portal-remains-portal}
Let $s$ be the current vertex, $t$ the target vertex, 
and suppose that the routing scheme sends the packet 
from $s$ to $v$.
Moreover, let $p_0$ be the $s$-$t$-portal.
Then, $p_0$ is a possible candidate for the $v$-$t$-portal, and we have
$\theta_c(s,t;p_0)\geq\theta_c(v,t;p_0)+\vert sv\vert_c$.
\end{lemma}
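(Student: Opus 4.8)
The plan is to establish two things: first, that $p_0$ remains a valid candidate portal after the hop from $s$ to $v$ (that is, $s,t,v\in V(\mu(p_0))$), and second, the key inequality relating the compressed distances. Let me think through the structure.

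For the validity claim, the routing function picks the next vertex $v$ as the next node on the unique path in the tree $T_{p_0}$ from $s$ toward the appropriate direction (toward $p_0$ if $t\notin T_{p_0}(s)$, away from $p_0$ otherwise). Since $T_{p_0}$ is a tree spanning $V(\mu(p_0))$, and $v$ is a neighbor of $s$ in this tree, we immediately get $v\in V(\mu(p_0))$. Combined with $t\in V(\mu(p_0))$ (which held because $p_0$ was an $s$-$t$-portal), this shows $s,t,v$ are all in $V(\mu(p_0))$, so $p_0$ is a candidate for the $v$-$t$-portal.

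For the inequality, I would split into the two cases defined in the routing function. The crucial observation is that $|sv|_c = \lfloor |sv|\cdot c\rfloor$ and the compressed distances along the tree edge should behave additively up to the floor operation. Since $v$ is the next vertex on the tree path, the (uncompressed) tree distances satisfy $d_{\mu(p_0)}(s,p_0) = d_{\mu(p_0)}(v,p_0) \pm |sv|$ depending on whether we route toward or away from $p_0$. The main technical content is to transfer this exact additive relation to the floored quantities. I expect to use the elementary fact that $\lfloor a\rfloor + \lfloor b\rfloor \leq \lfloor a+b\rfloor$ for nonnegative reals, applied to the compressed distance and $|sv|_c$. In the case $t\notin T_{p_0}(s)$ (routing toward $p_0$), I have $d_{\mu(p_0)}(s,p_0)=d_{\mu(p_0)}(v,p_0)+|sv|$, so after compressing, $d_{\mu(p_0)}(v,p_0)_c + |sv|_c \leq d_{\mu(p_0)}(s,p_0)_c$; adding $d_{\mu(p_0)}(t,p_0)_c$ to both sides gives $\theta_c(v,t;p_0)+|sv|_c\leq\theta_c(s,t;p_0)$, as desired. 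In the case $t\in T_{p_0}(s)$ (routing away from $p_0$), $v$ lies on the path from $s$ to $t$ inside the subtree, so $d_{\mu(p_0)}(v,p_0)=d_{\mu(p_0)}(s,p_0)+|sv|$, and since $\theta_c$ here subtracts the $s$-side term, the sign works out: $\theta_c(s,t;p_0)=d_{\mu(p_0)}(t,p_0)_c-d_{\mu(p_0)}(s,p_0)_c$ while $\theta_c(v,t;p_0)=d_{\mu(p_0)}(t,p_0)_c-d_{\mu(p_0)}(v,p_0)_c$, and the same floor inequality yields the claim.

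The main obstacle I anticipate is handling the floor function carefully, since the compressed distances are not exactly additive — the subadditivity of the floor is what saves the argument, and I must make sure the inequality points in the right direction in both cases. A secondary subtlety is confirming that in the ``route away'' case, the chosen vertex $v$ genuinely lies on the tree path from $s$ to $t$ (not merely some child of $s$), which relies on the subtree-membership test $r_{p_0}(t)\in[l_{p_0}(s),r_{p_0}(s)]$ being used correctly by the routing function to select the child subtree containing $t$. Once these points are pinned down, the two cases are symmetric calculations and the lemma follows.
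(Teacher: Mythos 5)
Your overall strategy is the same as the paper's: establish $v\in V(\mu(p_0))$ from the fact that $sv$ is an edge of the shortest path tree $T_{p_0}$, then split on whether $t\in T_{p_0}(s)$, use that $d_{\mu(p_0)}(p_0,\cdot)$ is additive along edges of the shortest path tree, and transfer this to the floored values via $\lfloor a\rfloor+\lfloor b\rfloor\leq\lfloor a+b\rfloor$. Your case $t\in T_{p_0}(s)$ is handled correctly, including the point that the child $v$ chosen by the tree routing satisfies $t\in T_{p_0}(v)$, so that $\theta_c(v,t;p_0)=d_{\mu(p_0)}(t,p_0)_c-d_{\mu(p_0)}(v,p_0)_c$ is the right formula there.

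There is, however, a gap in your case $t\notin T_{p_0}(s)$. Here the packet moves to the parent $v$ of $s$, and you identify $\theta_c(v,t;p_0)$ with $d_{\mu(p_0)}(t,p_0)_c+d_{\mu(p_0)}(v,p_0)_c$. That identity presumes $t\notin T_{p_0}(v)$, which can fail: since $T_{p_0}(v)\supsetneq T_{p_0}(s)$, the target $t$ may well enter the subtree of $v$ even though it was outside the subtree of $s$ --- this happens precisely when $v$ is the lowest common ancestor of $s$ and $t$ in $T_{p_0}$, which is the generic situation when routing upward. In that subcase the correct formula is $\theta_c(v,t;p_0)=d_{\mu(p_0)}(t,p_0)_c-d_{\mu(p_0)}(v,p_0)_c$, so the quantity you bounded is not $\theta_c(v,t;p_0)$. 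The error is benign in direction --- the sign flip only makes $\theta_c(v,t;p_0)$ smaller, so your final inequality remains true --- but as written your argument asserts an equality that is false in this subcase. The paper sidesteps this by using only the one-sided bound $\theta_c(v,t;p_0)\leq d_{\mu(p_0)}(t,p_0)_c+d_{\mu(p_0)}(p_0,v)_c$, which holds no matter which of the two formulas defines $\theta_c(v,t;p_0)$; inserting that single line (and then concluding $\theta_c(s,t;p_0)=d_{\mu(p_0)}(t,p_0)_c+d_{\mu(p_0)}(p_0,s)_c\geq d_{\mu(p_0)}(t,p_0)_c+d_{\mu(p_0)}(p_0,v)_c+\vert sv\vert_c\geq\theta_c(v,t;p_0)+\vert sv\vert_c$) repairs your proof completely.
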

\begin{proof}
First, let $\mu=\mu(p_0)$. Since $sv$ is an edge of the 
shortest path tree $T_{p_0}$, it follows that $v\in V(\mu(p_0))$.
This gives the first part of the claim. For the second part, 
we distinguish two cases:

\subparagraph*{Case 1: $t\in T_{p_0}(s)$.}  In this case, 
we have $t\in T_{p_0}(v)$, and thus
$\theta_c(v,t;p_0)=d_{\mu}(t,p_0)_c-d_{\mu}(p_0,v)_c$. 
Moreover, we have
\[
  d_{\mu}(p_0,v)_c = 
  \lfloor d_{\mu}(p_0,v)\cdot c\rfloor = 
  \lfloor d_{\mu}(p_0,s)\cdot c+\vert sv\vert\cdot c\rfloor
  \geq \lfloor d_{\mu}(p_0,s)\cdot c\rfloor+\lfloor\vert sv\vert\cdot c\rfloor 
  =d_{\mu}(p_0,s)_c+\vert sv\vert_c,
\]
since $s$ is on the path in $T_{p_0}$ from $p_0$ to $v$. Hence,
\[
  \theta_c(s,t;p_0) 
  = d_{\mu}(t,p_0)_c-d_{\mu}(p_0,s)_c
  \geq d_{\mu}(t,p_0)_c-d_{\mu}(p_0,v)_c +\vert sv\vert_c 
  = \theta_c(v,t;p_0) +\vert sv\vert_c,
\]
as desired.

\subparagraph*{Case 2: $t\notin T_{p_0}(s)$.} Similarly to 
the first case, we have
$d_{\mu}(p_0,s)_c\geq d_{\mu}(p_0,v)_c+\vert sv\vert_c$ and 
$\theta_c(v,t;p_0)\leq d_{\mu}(t,p_0)_c+d_{\mu}(p_0,v)_c$.
Thus, we get
\[
  \theta_c(s,t;p_0)
  =d_{\mu}(t,p_0)_c+d_{\mu}(p_0,s)_c\geq d_{\mu}(t,p_0)_c+d_{\mu}(p_0,v)_c +\vert sv\vert_c
\geq\theta_c(v,t;p_0) +\vert sv\vert_c, 
\]
and the claim follows.
\end{proof}

\begin{corollary}
\label[corollary]{cor:distance-doesnt-decrease}
Let $s$, $t$, and $v$ be as 
in \cref{lem:portal-remains-portal}.
Then, $\theta_c(s,t)\geq\theta_c(v,t)+\vert sv\vert_c$.
\end{corollary}

\begin{proof}
Let $p_0$ be the $s$-$t$-portal. From 
\cref{lem:portal-remains-portal}, we get
\[
  \theta_c(s,t)=
  \theta_c(s,t;p_0)
  \geq 
  \theta_c(v,t;p_0)+\vert sv\vert_c\geq\theta_c(v,t)+\vert sv\vert_c.
\]
The claim follows.
\end{proof}

\begin{lemma}
\label[lemma]{lem:portal-index-decreases}
Let $s$, $t$ and $v$ be as 
in \cref{lem:portal-remains-portal}.
Let $p$
be the $s$-$t$-portal and $q$ be the $v$-$t$-portal. Then, if
$\theta_c(s,t)=\theta_c(v,t)$, it follows that $p_{\id}\geq q_{\id}$.
\end{lemma}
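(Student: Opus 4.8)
The plan is to leverage the lexicographic tiebreaking rule built into the definition of the $v$-$t$-portal. Recall that $q$ is chosen as the portal that lexicographically minimizes $(\theta_c(v,t;\cdot),\cdot_{\id})$ over all candidate portals. Consequently, if I can show that the $s$-$t$-portal $p$ attains exactly the minimal value $\theta_c(v,t)$ of the first coordinate, then $p$ and $q$ tie on that coordinate, and the lexicographic choice of $q$ immediately forces $q_{\id}\le p_{\id}$, which is precisely the claim. So the whole argument reduces to proving that $\theta_c(v,t;p)=\theta_c(v,t)$.

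First I would collect the inequalities already available. By \cref{lem:portal-remains-portal}, $p$ is a valid candidate for the $v$-$t$-portal and satisfies $\theta_c(s,t;p)\ge\theta_c(v,t;p)+|sv|_c$. Since $p$ is the $s$-$t$-portal, $\theta_c(s,t)=\theta_c(s,t;p)$; and since $q$ minimizes the first coordinate among all candidates—$p$ included—we have $\theta_c(v,t)=\theta_c(v,t;q)\le\theta_c(v,t;p)$.

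Next I would chain these together, invoking the hypothesis $\theta_c(s,t)=\theta_c(v,t)$ and the fact that $|sv|_c\ge 0$:
\[
\theta_c(v,t;q)=\theta_c(v,t)=\theta_c(s,t)=\theta_c(s,t;p)\ge\theta_c(v,t;p)+|sv|_c\ge\theta_c(v,t;p)\ge\theta_c(v,t;q).
\]
Because the chain begins and ends at $\theta_c(v,t;q)$, every inequality must in fact be an equality. In particular $\theta_c(v,t;p)=\theta_c(v,t;q)=\theta_c(v,t)$, so $p$ indeed attains the minimal first coordinate. As $q$ is the lexicographic minimizer and $p,q$ share this first coordinate, the tiebreak on identifiers yields $q_{\id}\le p_{\id}$, as desired.

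The step I would be most careful about is justifying that $p$ is genuinely a competing candidate for the $v$-$t$-portal, i.e.\ that $v,t\in V(\mu(p))$, so that the minimality of $q$ may legitimately be compared against $\theta_c(v,t;p)$. This is exactly the first assertion of \cref{lem:portal-remains-portal}, which gives $v\in V(\mu(p))$; and $t\in V(\mu(p))$ holds already because $p$ was chosen as the $s$-$t$-portal. Once this candidacy is secured, the conclusion is nothing more than squeezing the displayed chain down to a string of equalities, so I do not expect any further obstacle.
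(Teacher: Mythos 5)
Your proposal is correct and matches the paper's own proof essentially step for step: the same chain of (in)equalities via \cref{lem:portal-remains-portal} collapsing to equalities, followed by the lexicographic tiebreak on identifiers. Your explicit check that $p$ remains a valid candidate for the $v$-$t$-portal (i.e., $v,t\in V(\mu(p))$) is a nice touch the paper leaves implicit, but the argument is the same.
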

\begin{proof}
From \cref{lem:portal-remains-portal}, we have
\[
  \theta_c(v,t;q)= \theta_c(v, t) = \theta_c(s, t) = 
\theta_c(s,t;p)\geq\theta_c(v,t;p)+\vert sv\vert_c
\geq\theta_c(v,t;p)\geq\theta(v,t;q).
\]
Hence, $\theta_c(v,t;p)=\theta_c(v,t;q)$. Furthermore, by 
construction, we have
$(\theta_c(v,t;p),p_{\id})\geq(\theta_c(v,t;q),q_{\id})$. 
Thus, the claim follows.
\end{proof}

\begin{lemma}
\label[lemma]{lem:routing-correct}
The routing scheme is correct.
\end{lemma}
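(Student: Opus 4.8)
The plan is to exhibit a potential function on the current vertex that strictly decreases with every hop as long as the target has not been reached, and that takes values in a well-ordered set, so that termination at $t$ follows at once. Fixing the target $t$, I would assign to every current vertex $r\neq t$ the triple
\[
  \Phi(r) = \bigl(\theta_c(r,t),\ p_{\id},\ h_p(r,t)\bigr),
\]
where $p$ is the $r$-$t$-portal and $h_p(r,t)$ denotes the number of edges on the unique path from $r$ to $t$ in the shortest path tree $T_p$. Each $\theta_c(r,t;p)$ is either a sum of floors (non-subtree case) or, in the subtree case $t\in T_p(r)$, the nonnegative difference $d_{\mu(p)}(t,p)_c-d_{\mu(p)}(p,r)_c$, since then $r$ lies on the $p$-to-$t$ path and floors are monotone; hence $\theta_c(r,t)\ge 0$ is a nonnegative integer. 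As $p_{\id}$ and $h_p(r,t)$ are nonnegative integers as well, $\Phi$ maps into $\N^3$, which is well-ordered under the lexicographic order.

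The core step is to show that one routing hop from $r$ to the next vertex $r'$ strictly decreases $\Phi$ whenever $r\neq t$, which I would argue through the three coordinates in turn. By \cref{cor:distance-doesnt-decrease} we have $\theta_c(r,t)\ge\theta_c(r',t)+|rr'|_c\ge\theta_c(r',t)$, so a strict drop of the first coordinate finishes the case. If $\theta_c(r,t)=\theta_c(r',t)$, then \cref{lem:portal-index-decreases} (applied with $s=r$, $v=r'$) gives $p_{\id}\ge q_{\id}$ for the $r$-$t$-portal $p$ and the $r'$-$t$-portal $q$, so a strict drop of the second coordinate again suffices. The only remaining possibility is $p_{\id}=q_{\id}$, which forces $p=q=:p_0$ because identifiers are unique. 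In this situation both hops are governed by the same tree $T_{p_0}$, and the packet is forwarded by the tree-routing function $\sigma_\textup{tree}$ of \cref{lem:tree-routing} along the unique $r$-to-$t$ path in $T_{p_0}$; since $r\neq t$, this path is nonempty and $r'$ is the vertex immediately following $r$ on it, so $h_{p_0}(r',t)=h_{p_0}(r,t)-1$ and the third coordinate strictly decreases. Thus $\Phi(r)>\Phi(r')$ lexicographically in every case.

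To conclude, I would observe that a strictly decreasing sequence in the well-order $(\N^3,<_{\mathrm{lex}})$ must be finite, so the routing sequence $(v_i)$ reaches $v_m=t$ for some finite $m$; choosing $m(s,t)$ minimal yields $v_j\neq t$ for all $j<m(s,t)$. It then remains only to check that the packet stays at $t$: when the current vertex already equals the target, the $t$-$t$-portal is selected and $\sigma_\textup{tree}$ routes from $t$ to itself, returning the self-loop port $0$ with $\node(t,0)=t$, so $v_j=t$ for all $j\ge m(s,t)$. This is exactly the definition of correctness.

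I expect the main obstacle to be the tie-breaking case $p_{\id}=q_{\id}$: here neither the approximate-distance coordinate nor the portal index makes progress, and one must recognise that the portal is genuinely unchanged, so that the step reduces to a pure tree-routing step in a fixed tree. The third coordinate $h_p(r,t)$ is introduced precisely to capture this residual progress, and its strict decrease hinges on \cref{lem:tree-routing} forwarding along the unique—and therefore strictly shortening—path in $T_{p_0}$.
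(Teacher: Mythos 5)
Your proof is correct and follows essentially the same approach as the paper: the identical potential $\bigl(\theta_c(\cdot,t),p_{\id},h_p(\cdot,t)\bigr)$, strict lexicographic decrease via \cref{cor:distance-doesnt-decrease}, \cref{lem:portal-index-decreases}, and the tree-routing guarantee of \cref{lem:tree-routing}, and termination by well-ordering of $\N^3$. Your added details---verifying non-negativity of $\theta_c$ in the subtree case and checking that the packet remains at $t$ after arrival---are sound elaborations of points the paper leaves implicit.
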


\begin{proof}
Let $s$ be the current vertex and $t$ the 
desired target vertex, 
and let $p$ be the $s$-$t$-portal.
To measure the progress towards $t$, 
we consider the triple 
$(\theta_c(s,t),p_{\id},h_p(s,t))$, where 
$h_p(s, t)$ denotes the hop distance in $T_p$ 
between $s$ and $t$. i.e., the number of 
edges on the path between $s$ and $t$ in $T_p$.

Suppose that the routing 
scheme sends the packet from $s$ to $v$, and 
let $q$ be the $v$-$t$-portal. 
We argue that 
$(\theta_c(v,t),q_{\id},h_q(v,t)) <
(\theta_c(s,t),p_{\id},h_p(s,t)))$.
By \cref{cor:distance-doesnt-decrease}
and \cref{lem:portal-index-decreases}, 
it suffices to show that if
$\theta_c(s,t)=\theta_c(v,t)$
and $p = q$, then
$h_p(s,t)> h_q(v,t)$.
However, this is clear, because by \cref{lem:tree-routing}, 
$sv$ is an edge of $T_p$ that leads from $s$ towards $t$, and  
$T_q = T_p$.

Now, since the triples
$(\theta_c(s,t),p_{\id},h_p(s,t))$ lie in 
$\N^3$ and since $(0, 0, 0)$ is a global minimum, 
it follows that 
the data packet eventually
arrives at the target vertex $t$.
\end{proof}

\begin{lemma}
\label[lemma]{lem:bounding-stretch}
For any two vertices $s$ and $t$, we have $\delta(s,t)\leq d(s,t)+O(\eps D)$.
\end{lemma}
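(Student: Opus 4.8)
The plan is to track a single quantity—the rounded potential $\theta_c(\cdot\,,t)$—along the entire routing path, and to show that it simultaneously upper-bounds (up to rounding) the total Euclidean length travelled by the packet, while itself being controlled by the Chan–Skrepetos oracle estimate for $d(s,t)$.

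Fix $s$ and $t$, and let $s=v_0,v_1,\dots,v_m=t$ be the routing path, which is finite by \cref{lem:routing-correct}. The first step is a telescoping argument. \cref{cor:distance-doesnt-decrease} gives $\theta_c(v_{i-1},t)\geq\theta_c(v_i,t)+\vert v_{i-1}v_i\vert_c$ for each $i$, so summing over $i=1,\dots,m$, and using that $\theta_c(t,t)=0$ (every portal places $t$ in its own subtree, so the two defining terms cancel), yields
\[
  \theta_c(s,t)\;\geq\;\sum_{i=1}^m \vert v_{i-1}v_i\vert_c .
\]

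Next I would bound $\theta_c(s,t)$ from above by the true distance. Since the $s$-$t$-portal $p_0$ minimizes $\theta_c(s,t;\cdot)$, we have $\theta_c(s,t)\leq\theta_c(s,t;p_\text{opt})$ for the oracle-optimal portal $p_\text{opt}$. A short rounding estimate—using $x_c=\lfloor xc\rfloor\in(xc-1,xc]$ in each of the two cases defining $\theta_c(s,t;p)$—shows $\theta_c(s,t;p_\text{opt})\leq c\,\theta(s,t;p_\text{opt})+1$. Combining this with $\theta(s,t;p_\text{opt})\leq\theta(s,t)$ and the oracle guarantee \cref{equ:oracle-approx}, I obtain
\[
  \theta_c(s,t)\;\leq\;c\big(d(s,t)+O(\eps D)\big)+1 .
\]

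Finally, I would convert the rounded edge lengths back to Euclidean lengths. Applying $\vert v_{i-1}v_i\vert_c>\vert v_{i-1}v_i\vert\cdot c-1$ to the left-hand sum gives $c\,\delta(s,t)-m\leq\theta_c(s,t)$, hence $\delta(s,t)\leq d(s,t)+O(\eps D)+(m+1)/c$. The last term is where the choice $c=n(\eps D)^{-1}$ pays off, and I expect the main obstacle to be controlling $m$, the number of hops, rather than any of the estimates above. The key observation is that the progress triple of \cref{lem:routing-correct} is a function of the current vertex alone (for fixed $t$) and strictly decreases along the path, so the vertices $v_0,\dots,v_m$ are pairwise distinct; therefore $m\leq n-1$ and $(m+1)/c\leq n/c=\eps D$. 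Substituting this bound gives $\delta(s,t)\leq d(s,t)+O(\eps D)$, as claimed.
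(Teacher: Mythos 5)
Your proposal is correct and follows essentially the same route as the paper's proof: bound $\theta_c(s,t)\leq c\cdot\theta(s,t)+1$ by a rounding estimate, telescope the potential drops from \cref{cor:distance-doesnt-decrease} along the routing path, convert rounded lengths back at a cost of $m/c$, and finish with $m\leq n-1$ and the choice $c=n(\eps D)^{-1}$ together with \cref{equ:oracle-approx}. In fact, you make explicit two details the paper leaves implicit—that $\theta_c(t,t)=0$ in the telescoping, and that $m\leq n-1$ because the strictly decreasing progress triple of \cref{lem:routing-correct} forces the routing path to be simple—which only strengthens the write-up.
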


\begin{proof}
First, we show that $\theta_c(s,t)\leq c\cdot\theta(s,t)+1$: let $p_0$ be
the $s$-$t$-portal, and let $p_\text{opt}$ be the portal 
minimizing $\theta(s,t;\cdot)$
among all portals. Let $\mu=\mu(p_\text{opt})$. We obtain.
\begin{align*}
\theta_c(s,t)&=\theta_c(s,t;p_0)\leq\theta_c(s,t;p_\text{opt})
=\lfloor c\cdot d_{\mu}
(t,p_\text{opt})\rfloor\pm\lfloor c\cdot d_{\mu}(p_\text{opt},s)\rfloor\\
&\leq\lfloor c\cdot\left(d_{\mu}
(t,p_\text{opt})\pm d_{\mu}(p_\text{opt},s)\right)\rfloor +1
\leq\lfloor c\cdot\theta(s,t)\rfloor + 1 \leq c\cdot\theta(s,t)+1,
\end{align*}
where the $\pm$-operator is used to cover the two possible 
cases in the definition of $\theta_c$, and 
because $\lfloor a \rfloor + \lfloor b  \rfloor \leq 
\lfloor a + b \rfloor$ and 
$\lfloor a \rfloor - \lfloor b  \rfloor \leq 
\lfloor a - b \rfloor + 1$,  for all $a, b \geq 0$.
By \cref{lem:routing-correct}, we know that the routing terminates. 
Let $\pi: s=w_0,\dots,w_m=t$
be the routing path. From \cref{cor:distance-doesnt-decrease}, we 
get
$\vert w_i w_{i+1}\vert_c\leq \theta_c(w_i,t)-\theta_c(w_{i+1},t)$, 
and thus
\begin{align*}
\delta(s,t)&=\sum_{i=0}^{m-1}\vert w_i w_{i+1}\vert\leq\sum_{i=0}^{m-1}\dfrac{\vert w_i w_{i+1}\vert_c+1}{c}
=\dfrac{m}{c}+\dfrac{1}{c}\sum_{i=0}^{m-1}\vert w_i w_{i+1}\vert_c\\
&\leq\dfrac{m}{c}+\dfrac{1}{c}\sum_{i=0}^{m-1} (\theta_c(w_i,t)-\theta_c(w_{i+1},t))=\dfrac{m}{c}+\dfrac{\theta_c(s,t)}{c}\\
&\leq\dfrac{m}{c} + \dfrac{c\cdot\theta(s,t)+1}{c}=\dfrac{m+1}{c}+\theta(s,t)
\end{align*}
Now, using \cref{equ:oracle-approx} from \cref{subsec:Chan-Skrepetos}, the choice of 
$c = n \cdot (\eps D)^{-1}$, and the fact that $m\leq n-1$, 
we get 
\[
\dfrac{m+1}{c}+\theta(s,t)
\leq \frac{n}{n \cdot (\eps D)^{-1}} 
+ d(s, t) + O(\eps D) 
=  d(s, t) + O(\eps D),
\]
as claimed.
\end{proof}

We can now conclude with our first theorem.

\begin{theorem}
\label[theorem]{thm:additive-routing}
Let $\DG(V)$ be an $n$-vertex unit disk graph with diameter $D$. 
Furthermore, let $\eps > D^{-1}$.
There is a routing scheme with label size 
$O\left(\eps^{-1}\log^3 n/\log\log n\right)$ whose routing function
$\sigma_\textup{add}$ routes any data packet on a path with additive stretch $O(\eps D)$.
\end{theorem}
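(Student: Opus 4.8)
The plan is to take the routing scheme constructed throughout this subsection---the decomposition tree $\T$ from \cref{subsec:Chan-Skrepetos}, the labels that store the tuple $(p_{\id}, d_{\mu(p)}(v,p)_c, l_p(v), r_p(v), \lab_p(v))$ for every portal $p$ with $v \in V(\mu(p))$, and the routing function $\sigma_\textup{add}$ that computes the $s$-$t$-portal $p_0$ and forwards the packet one hop along $T_{p_0}$---and simply verify its three claimed properties by assembling the lemmas that have already been established. In other words, this theorem is a bookkeeping step: the real content lives in the preceding lemmas, and here I would only package them into a single statement.

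Concretely, I would argue as follows. For the label size, \cref{lem:small-label} already shows $\vert\lab(v)\vert \in O(\eps^{-1}\log^3 n/\log\log n)$ for every $v \in V$, which is exactly the bound claimed in the theorem. For correctness, \cref{lem:routing-correct} shows that for any source $s$ and target $t$ the routing sequence reaches $t$ after finitely many hops (and, via the strictly decreasing triple $(\theta_c(\cdot,t), p_{\id}, h_p(\cdot,t))$, does not pass through $t$ prematurely), so $\Rr$ is correct in the sense of the routing-scheme definition. For the stretch guarantee, \cref{lem:bounding-stretch} gives $\delta(s,t) \leq d(s,t) + O(\eps D)$ for every pair $s, t$, which is precisely the additive stretch $O(\eps D)$ asserted here. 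Combining these three statements yields the theorem.

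The only point I would take care to check is that the hypotheses of all the invoked results are in force. In particular, the decomposition tree and the oracle approximation \cref{equ:oracle-approx} require $\eps > D^{-1}$, and the stretch computation in \cref{lem:bounding-stretch} additionally uses the scaling choice $c = n\cdot(\eps D)^{-1}$ together with $m \leq n-1$ for the routing-path length. All of these hold under the theorem's assumption $\eps > D^{-1}$, so there is no genuine obstacle: the proof reduces to citing \cref{lem:small-label,lem:routing-correct,lem:bounding-stretch} and observing that their hypotheses coincide with those of the theorem.
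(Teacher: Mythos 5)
Your proposal is correct and takes exactly the paper's route: the paper offers no separate proof for \cref{thm:additive-routing} (it is introduced with ``We can now conclude with our first theorem''), since the statement is precisely the combination of \cref{lem:small-label} (label size), \cref{lem:routing-correct} (correctness), and \cref{lem:bounding-stretch} (additive stretch $O(\eps D)$), all established under the same hypothesis $\eps > D^{-1}$ and the scaling $c = n\cdot(\eps D)^{-1}$. Your additional check that the hypotheses of the invoked lemmas coincide with those of the theorem is the right (and only) thing to verify.
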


\section{A Routing Scheme with Stretch {\boldmath $1 + \eps$}}

Let $\DG(V)$ be an $n$-vertex unit disk graph with diameter $D$, 
and let $\eps > 0$. Furthermore, without loss of generality,
we can assume that $\eps\leq 1$. 
For our routing scheme, we need the following 
two ingredients from the literature.

\subparagraph*{Planar spanners.}
Let $c \geq 1$. A \emph{$c$-spanner} for $\DG(V)$ 
is a subgraph $H$ of $\DG(V)$ with vertex set $V$
such that for any $s, t \in V$, we have 
$d_H(s, t) \leq c \cdot d(s, t)$.
The following lemma shows  
the existence of good \emph{planar} spanners for unit 
disk graphs and was proven by Li, Calinescu, and Wan
\cite{li2002distributed}.

\begin{lemma}
\label[lemma]{lem:planar-spanner}
For any $n$-vertex unit disk graph $\DG(V)$, there exists 
a planar $4$-spanner $H \subseteq \DG(V)$. The spanner 
$H$ can be found in $O(n \log n)$ time.\footnote{Li, Calinescu, and
  Wan actually proved
that there is a planar $2.42$-spanner \cite{li2002distributed}. Since we do not care about 
the exact constant, we use a
power of $2$ to simplify later calculations.}
\end{lemma}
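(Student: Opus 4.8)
The plan is to take $H$ to be the \emph{unit Delaunay triangulation} of $V$: start from the Delaunay triangulation $\mathrm{Del}(V)$ and delete every Delaunay edge of Euclidean length greater than $1$. Since $H \subseteq \mathrm{Del}(V)$ and the Delaunay triangulation is planar, $H$ is planar; since every surviving edge has length at most $1$, we have $H \subseteq \DG(V)$; and since $\mathrm{Del}(V)$ can be computed in $O(n \log n)$ time, so can $H$. Thus all the structural claims are immediate, and the entire content of the lemma is the spanner bound $d_H(s,t) \le 4\, d(s,t)$.

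To get the spanner bound, I would first reduce to single edges of $\DG(V)$. It suffices to show that for every edge $uv$ of $\DG(V)$ (so $|uv| \le 1$) there is a $u$--$v$ path inside $H$ of length at most $4|uv|$: given this, an arbitrary pair $s,t$ is handled by taking a shortest path $s = x_0, x_1, \dots, x_k = t$ in $\DG(V)$ and concatenating the short $H$-paths for the edges $x_i x_{i+1}$, which yields $d_H(s,t) \le \sum_i 4\,|x_i x_{i+1}| = 4\, d(s,t)$. In particular this also shows $H$ is connected.

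The core is therefore to produce, for a single unit-length pair $u, v$, a short $H$-path. Here I would invoke the classical fact that the Delaunay triangulation is a $t$-spanner of the complete Euclidean graph with spanning ratio $t \le \tfrac{4\pi\sqrt 3}{9} < 4$ (Keil--Gutwin; originally Dobkin--Friedman--Supowit), which supplies a $u$--$v$ path $\pi$ in $\mathrm{Del}(V)$ with total length at most $t\,|uv| \le 4|uv|$. The one remaining point --- and the real obstacle --- is that $\pi$ must live in $H$, i.e.\ every edge of $\pi$ must have length at most $1$; only then is $\pi$ a path of $H$ and the bound $d_H(u,v) \le 4|uv|$ established.

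To settle this locality, I would use the explicit canonical spanning path (the Keil--Gutwin construction that walks along the upper and lower chains of the Delaunay triangles stabbed by the segment $uv$) together with the empty-circumcircle property of Delaunay triangles. The aim is to show that when $|uv| \le 1$ this path is confined to a bounded neighbourhood of $uv$, so that each of its edges has length at most $1$ and hence belongs to $H$. I expect this confinement step to be the main difficulty: Delaunay edges can be arbitrarily long even between nearby points, so one genuinely has to exploit the interaction between the empty-circle property and the hypothesis $|uv| \le 1$ to rule out long edges along the canonical path, rather than bounding edge lengths by any purely local argument. Once this is in place, the three structural observations from the first paragraph complete the proof.
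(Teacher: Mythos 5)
Your construction is the right one, and in fact it is \emph{the} construction behind this lemma: the paper itself gives no proof of \cref{lem:planar-spanner}, but imports it from Li, Calinescu, and Wan \cite{li2002distributed}, whose planar spanner is exactly the unit Delaunay triangulation you describe (Delaunay edges of length at most $1$), with the Keil--Gutwin constant $4\sqrt{3}\pi/9 \approx 2.42$ relaxed to $4$ for notational convenience. Your three structural observations (planarity, $H \subseteq \DG(V)$, $O(n\log n)$ construction time) are correct, and so is the reduction of the stretch bound to single edges $uv$ of $\DG(V)$ by concatenation along a shortest path.

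However, as written your proposal has a genuine gap, and it sits exactly where you say it does. Keil--Gutwin only guarantees a $u$--$v$ path $\pi$ in $\mathrm{Del}(V)$ of length at most $2.42\,|uv|$; if even one edge of $\pi$ has length greater than $1$, then $\pi$ is not a path of $H$ and the argument yields nothing --- you cannot locally reroute around such an edge, since a priori $H$ might not contain \emph{any} short $u$--$v$ path. The missing claim, that for $|uv| \le 1$ the canonical Keil--Gutwin path uses only edges of length at most $|uv|$, is true, but it is precisely the technical contribution of \cite{li2002distributed} beyond Keil--Gutwin: one has to re-enter the Keil--Gutwin induction and establish a locality invariant for the edges it places on the path, not merely invoke the spanning ratio as a black box. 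Note also that the gap cannot be closed by the obvious elementary substitute: recursing on a point $w$ inside the disk with diameter $uv$ does keep all produced edges shorter than $|uv|$ (its leaves are short Gabriel, hence Delaunay, edges), but this recursion is the Gabriel-graph argument, whose stretch is not constant (it can be $\Omega(\sqrt{n})$), so shortness of edges and constant stretch must be obtained simultaneously, which is what makes the confinement step genuinely nontrivial. To turn your plan into a proof you must either carry out that confinement analysis or do what the paper does and cite \cite{li2002distributed} for it.
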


\subparagraph*{Sparse covers.}
Let $H =(V, E)$ be a weighted planar graph, and let $r \in \N$.
A \emph{sparse $r$-cover} for $H$ is a collection of 
connected subgraphs $H_1, H_2, \dots$ of $H$
with the following properties:
\begin{enumerate}[(i)]
\item for each vertex $v \in V$, there is at least one 
  subgraph $H_i$ that contains all the vertices $w \in V$
  with $d_H(v, w) \leq r$;
\item each vertex $v \in V$ is contained in $O(1)$ subgraphs $H_i$; 
  and
\item $\diam(H_i)\leq 2^6\cdot r$, for every subgraph $H_i$, 
  where $\diam(H_i)$ is the diameter of $H_i$.
\end{enumerate}
The following lemma establishes the existence of sparse covers 
for planar graphs and has been proven by Kawarabayashi, Sommer,
and Thorup \cite{kawarabayashi2013more}.

\begin{lemma}
\label[lemma]{lem:sparse-cover}
For any weighted planar graph $H$ with $n$ vertices and for any 
$r\in\N$, we can compute a sparse $r$-cover for $H$ in $O(n \log n)$ 
time.\footnote{Actually, it is possible to prove an upper bound of 
  $48r$ on the diameters of the subgraphs \cite{kawarabayashi2013more}, 
  but we again prefer a power of two in order to 
  simplify subsequent calculations.}
\end{lemma}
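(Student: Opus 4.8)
The plan is to build the cover from a hierarchical \emph{shortest-path separator} decomposition of $H$, which is the natural tool for exploiting planarity at the level of distances. First I would compute a shortest-path tree of $H$ and invoke the planar separator theorem in its path form: a planar graph equipped with a fixed shortest-path tree admits a separator consisting of $O(1)$ root-to-node tree paths (equivalently, $O(1)$ shortest paths) whose removal leaves components each containing at most a constant fraction of the vertices. Applying this recursively yields a decomposition tree of depth $O(\log n)$, and since each level performs $O(n)$ work on top of a single shortest-path computation, the total running time is $O(n\log n)$, matching the claimed bound. Each vertex $v$ is then assigned to the \emph{topmost} recursion node whose separator its $r$-ball $B_H(v,r)$ first meets (or to the leaf piece it reaches, if the ball never meets any separator).

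Next I would build the cover subgraphs. At a recursion node let $S$ be the union of its $O(1)$ separator paths; I would place net points along each such shortest path at spacing $r$ and, for every net point $a$, take the ball $B(a,3r)$ inside the current piece as a cover subgraph. Property (iii) is then immediate, since each such subgraph has diameter at most $6r\le 2^6r$. For property (i), suppose $B_H(v,r)$ meets the separator $S$ at the node to which $v$ is assigned. Because separator paths are \emph{shortest} paths, the shortest paths from $v$ to points of $B_H(v,r)$ on opposite sides of $S$ must cross $S$, so $v$ lies within distance $r$ of some separator path $P$; the net point $a\in P$ nearest the crossing point satisfies $d(v,a)\le 2r$, hence $B(a,3r)$ swallows all of $B_H(v,r)$. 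The one technical point to verify is that the shortest paths used in this argument stay inside the current piece, so that the distances driving the construction agree with distances in $H$; this is ensured by the invariant that $B_H(v,r)$ has not crossed any separator higher up in the recursion.

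The main obstacle is property (ii): bounding by $O(1)$ the number of cover subgraphs containing a fixed vertex $v$. With a generic separator recursion one obtains only overlap $O(\log n)$, because $v$ may lie within distance $r$ of a separator path at every one of the $O(\log n)$ recursion levels; this is precisely the guarantee achievable for arbitrary graphs in the classical Awerbuch--Peleg construction. Driving the overlap down to $O(1)$ while keeping the diameter $O(r)$ is impossible for general graphs, so this is exactly the step where planarity must be used in an essential way: one replaces the crude \textbf{net-the-separators} idea with a planarity-aware decomposition (in the cited work, cone decompositions around the shortest-path separators) that charges each vertex to only $O(1)$ cover pieces across \emph{all} levels combined. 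Making this counting precise---and checking that the $O(1)$ bound survives the interaction between different recursion levels---is the delicate part of the argument, and it is the reason the lemma holds for planar (indeed minor-free) graphs but not for general graphs.
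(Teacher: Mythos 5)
The first thing to note is that the paper does not prove \cref{lem:sparse-cover} at all: it imports the result as a black box from Kawarabayashi, Sommer, and Thorup \cite{kawarabayashi2013more}. So your attempt has to be judged on whether it stands on its own, and it does not --- you in fact say so yourself. Your construction (recurse on shortest-path separators, place net points at spacing $r$ along the $O(1)$ separator paths of each recursion node, take balls $B(a,3r)$ inside the current piece as cover pieces) does give property (iii) immediately, and property (i) with some care about distances inside a piece versus distances in $H$. But it gives property (ii) only \emph{per level}: a fixed vertex $v$ can lie within distance $3r$ of net points of some separator path at every one of the $\Theta(\log n)$ recursion levels, so the total overlap of your cover is $O(\log n)$, not $O(1)$. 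As you note, this is exactly what Awerbuch--Peleg-type constructions already achieve in arbitrary graphs. The step where you appeal to ``a planarity-aware decomposition (in the cited work, cone decompositions around the shortest-path separators)'' that ``charges each vertex to only $O(1)$ cover pieces across all levels combined'' is not a proof step; it is a restatement of the lemma being proved. The entire difficulty of the result --- the reason it holds for planar (indeed minor-free) graphs but fails for general graphs --- is concentrated in precisely that counting argument, and your proposal leaves it blank.

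So the honest summary is: your scaffolding (shortest-path separators, nets along the paths) is the right starting point and matches the general strategy in the literature, but what you actually establish is an $r$-cover with diameter $O(r)$ and overlap $O(\log n)$, which is a strictly weaker statement than \cref{lem:sparse-cover}. Since the paper itself only cites \cite{kawarabayashi2013more}, there is nothing in the paper your sketch conflicts with; it simply stops short of the claim. A complete argument would need to carry out the constant-overlap clustering of the separator paths, including its interaction across recursion levels, explicitly --- and that is the main technical contribution of the cited work, not a detail one can wave at.
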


\subparagraph*{The Routing Scheme.}
Now we have all ingredients for our final routing scheme. 
In the preprocessing phase, we compute a planar $4$-spanner $H$
of $\DG(V)$, as in \cref{lem:planar-spanner}. Then, we have
$\diam(H)\leq 4D$. 
Next, for each
$k\in\I=\{\lceil\log\frac{8}{\eps}\rceil,\dots, \lceil\log(4D)\rceil\}$, 
we use \cref{lem:sparse-cover} to
construct a sparse $2^k$-cover $(H_1^k,H_2^k,\dots)$ of $H$.
Let $G_i^k$ be the induced unit disk graph on the vertex set of 
$H_i^k$. Let $k_0=\lceil\log\frac{8}{\eps}\rceil$, for each $G_i^{k_0}$,
we apply the preprocessing mechanism of the low diameter routing 
scheme from \cref{lem:diam-routing}. For each $k\in\I\setminus\{k_0\}$, 
we apply to each $G_i^k$ the preprocessing step of
the routing scheme with additive stretch 
from \cref{thm:additive-routing}. We use $\lab_{k,i}$ to denote the 
resulting labeling for the graph $G_i^k$, for $k\in\I$.

Now, we describe how to obtain the labels for our 
routing scheme. Let $v$ be a vertex of $\DG(V)$ and let 
$k\in\I$. 
Since $H_1^k, H_2^k, \dots$ is a sparse $2^k$-cover, 
there exists an index $i(v,k)$ such that 
$H_{i(v,k)}^k$ contains all vertices 
$w \in V$ with $d_H(v, w) \leq 2^k$.
Now, for each $v \in V$, the label 
$\lab(v)$ is the concatenation of the 
tuples
$\big(k, i, b(i,k,v), \lab_{k,i}(v)\big)$,  
for each $k\in\I$ and each $i$ with
$v\in V(G_i^k)$.
Here $b(i,k,v)$ 
is a Boolean value that is true if
and only if $i=i(v,k)$. The following lemma 
bounds the maximum label size.
\begin{lemma}
\label[lemma]{lem:lab-size}
For every vertex $v\in V$, we have
$|\lab(v)|\in O\left(\dfrac{\log D\log^3n}{\eps\log\log n}+\dfrac{\log\left(n\right)}{\eps^4}\right)$.
\end{lemma}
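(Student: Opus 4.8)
The plan is to bound the total number of tuples stored in $\lab(v)$ and then the size of each individual tuple, treating the scale $k_0$ separately from the scales $k\in\I\setminus\{k_0\}$ since these use different underlying schemes. First I would count the tuples. The index set $\I$ has $|\I|\in O(\log D)$ elements, and for every fixed scale $k$, property~(ii) of the sparse cover (\cref{lem:sparse-cover}) guarantees that $v$ lies in only $O(1)$ of the subgraphs $H_i^k$, hence in $O(1)$ of the graphs $G_i^k$. Thus $\lab(v)$ consists of $O(\log D)$ tuples in total, of which $O(1)$ come from the scale $k_0$. The bookkeeping part $(k,i,b(i,k,v))$ of each tuple is cheap: $k$ and the Boolean need $O(\log\log D+1)$ bits, and $i$ needs $O(\log n)$ bits, because each vertex lies in $O(1)$ subgraphs, so across all $n$ vertices and $O(\log D)$ scales the number of subgraphs is $O(n\log D)$, polynomial in $n$. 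Hence the headers contribute only $O(\log D\log n)$ bits over the whole label, which will be dominated by the sublabels.

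The heart of the argument is bounding the stored sublabels $\lab_{k,i}(v)$. For $k\in\I\setminus\{k_0\}$, the sublabel is produced by the additive-stretch scheme of \cref{thm:additive-routing} run on $G_i^k$ with a parameter $\Theta(\eps)$; since $G_i^k$ has at most $n$ vertices, that theorem yields $|\lab_{k,i}(v)|\in O(\eps^{-1}\log^3n/\log\log n)$, independently of the diameter of $G_i^k$, and summing over the $O(\log D)$ such tuples gives the first term $O(\log D\log^3n/(\eps\log\log n))$. For the scale $k_0$ the sublabel comes from the low-diameter scheme of \cref{lem:diam-routing}, whose label size is $O(\eps^{-3}D'\log n)$ with $D'=\diam(G_i^{k_0})$, so I must control $D'$. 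This is the one genuinely non-trivial step: property~(iii) of the sparse cover only bounds $\diam(H_i^{k_0})\leq 2^6\cdot 2^{k_0}$ for the \emph{planar} subgraph, not for the unit disk graph $G_i^{k_0}$ on the same vertex set. The key observation is that, because $H\subseteq\DG(V)$, the graph $H_i^{k_0}$ is a spanning subgraph of $G_i^{k_0}$ carrying the same (Euclidean) edge weights, so shortest paths in $G_i^{k_0}$ are no longer than in $H_i^{k_0}$; therefore $\diam(G_i^{k_0})\leq\diam(H_i^{k_0})\leq 2^6\cdot 2^{k_0}$. Since $2^{k_0}=2^{\lceil\log(8/\eps)\rceil}<16/\eps$, this gives $D'\in O(1/\eps)$, and \cref{lem:diam-routing} then yields $|\lab_{k_0,i}(v)|\in O(\eps^{-3}\cdot\eps^{-1}\cdot\log n)=O(\eps^{-4}\log n)$ for each of the $O(1)$ tuples at scale $k_0$, i.e. the second term $O(\eps^{-4}\log n)$.

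Adding the three contributions --- the negligible $O(\log D\log n)$ headers, the scale-$k_0$ part $O(\eps^{-4}\log n)$, and the large-scale part $O(\log D\log^3n/(\eps\log\log n))$ --- yields $|\lab(v)|\in O\left(\dfrac{\log D\log^3n}{\eps\log\log n}+\dfrac{\log n}{\eps^4}\right)$, as claimed. I expect the diameter-transfer observation for $G_i^{k_0}$ (passing from the spanner diameter bound of the sparse cover to the unit disk graph diameter) to be the main obstacle; the remaining steps are routine counting and substitution into the two earlier label-size bounds.
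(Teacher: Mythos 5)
Your proof is correct and takes essentially the same approach as the paper's: count the $O(\log D)$ subgraphs containing $v$ (with $O(1)$ of them at scale $k_0$), bound the scale-$k_0$ sublabels via \cref{lem:diam-routing} using $\diam(G_i^{k_0})\in O(1/\eps)$, and bound the remaining sublabels via \cref{thm:additive-routing}. The diameter-transfer step you single out ($H_i^{k_0}$ is a spanning subgraph of $G_i^{k_0}$ with the same edge weights, hence $\diam(G_i^{k_0})\leq\diam(H_i^{k_0})\leq 2^6\cdot 2^{k_0}\in O(1/\eps)$) is exactly what the paper asserts without comment (and uses again in \cref{obs:onion-of-t}), and your extra accounting of the header bits $(k,i,b(i,k,v))$ is routine detail the paper omits.
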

\begin{proof}
Since there are $O(\log D)$ different values for $k$, 
and since for each $k$, the vertex $v$ appears in 
$O(1)$ subgraphs $G_i^k$, we have that 
$v$ lies in $O(\log D)$ different subgraphs
$G_i^k$. For the subgraphs $G_i^{k_0}$, the label 
$\lab_{k_0, i}(v)$ comes from the low diameter routing 
scheme.  Since
$\diam(G_i^{k_0})\in O(1/\eps)$, \cref{lem:diam-routing} implies 
that $\lab_{k_0, i}(v)$ needs $O(\log (n)/\eps^4)$ bits.
Since $v$ lies in $O(1)$ subgraphs $G_i^{k_0}$, we can conclude that 
the corresponding tuples in $\lab(v)$ require 
$O(\log (n)/\eps^4)$ bits in total.
For the remaining $O(\log D)$ subgraphs, we derive the 
label $\lab_{k, i}(v)$ from the
additive stretch routing scheme from 
\cref{thm:additive-routing}. Hence, the corresponding tuples 
take $O(\eps^{-1}\log D\log^3 n/\log\log n)$ bits in total. 
The claim follows.
\end{proof}

We next describe the routing function $\sigma$, see \cref{fig:onion-of-t}.
Suppose we are given the labels $\lab(s)$ and $\lab(t)$ 
of the current vertex $s$ and the target $t$, 
together with the broadcast function $\beta_s$. The routing function
works as follows: we find the smallest number $k=k(s,t)\in\I$ such 
that there is an index $i$ for which
the tuple $(k,i,\true,*)$ is in
$\lab(t)$ and the tuple $(k,i,*, *)$ 
is in $\lab(s)$.\footnote{The $*$ is a placeholder for an arbitrary 
value. Note that $\lab(s)$ and $\lab(t)$ each contain 
at most one tuple that starts with $k, i$}
We can now derive
the following observation:

\begin{observation}
\label[observation]{obs:onion-of-t}
Let $s,t$ be vertices of $G_i^k$ with $k=k(s,t)$. Then we have
$d(s,t)\leq 2^{k+6}$. Moreover, if $k> k_0$ we have $d(s,t)\geq 2^{k-3}$.
\end{observation}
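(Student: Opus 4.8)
The plan is to prove the two inequalities separately. The upper bound $d(s,t)\leq 2^{k+6}$ will follow immediately from the diameter guarantee of the sparse cover, while the lower bound (for $k>k_0$) will be extracted from the \emph{minimality} built into the definition of $k=k(s,t)$, combined with the covering property of the sparse cover and the stretch guarantee of the planar spanner.

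For the upper bound, I would observe that since $s$ and $t$ both lie in $G_i^k$, they both lie in the corresponding subgraph $H_i^k$ of the sparse $2^k$-cover. By property~(iii) of a sparse $2^k$-cover we have $\diam(H_i^k)\leq 2^6\cdot 2^k = 2^{k+6}$, so $s$ and $t$ are joined by a path of length at most $2^{k+6}$ inside $H_i^k$. Because $H_i^k\subseteq H\subseteq\DG(V)$ (using \cref{lem:planar-spanner}), this path is also a path in $\DG(V)$, and hence $d(s,t)\leq d_{H_i^k}(s,t)\leq 2^{k+6}$.

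For the lower bound, assume $k>k_0$, so that $k-1$ is still a valid index of $\I$. By the minimality of $k=k(s,t)$, the level $k-1$ must fail the routing condition: there is no index $i'$ for which $(k-1,i',\true,*)$ occurs in $\lab(t)$ while $(k-1,i',*,*)$ occurs in $\lab(s)$. The crucial point is that, by the definition of the Boolean flag, the \emph{only} tuple in $\lab(t)$ of the form $(k-1,i',\true,*)$ is the one with $i'=i(t,k-1)$. Hence the failure at level $k-1$ pins down to the single statement $s\notin V(G_{i(t,k-1)}^{k-1})$, i.e.\ $s$ is not a vertex of $H_{i(t,k-1)}^{k-1}$. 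By property~(i) of the sparse $2^{k-1}$-cover, this subgraph contains every vertex $w$ with $d_H(t,w)\leq 2^{k-1}$; since $s$ is not among them, we conclude $d_H(s,t)>2^{k-1}$. Finally, \cref{lem:planar-spanner} gives $d_H(s,t)\leq 4\,d(s,t)$, so $d(s,t)>2^{k-1}/4=2^{k-3}$, as required.

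The upper bound is essentially a direct read-off and presents no difficulty. The step I expect to require the most care is the lower bound's reliance on the \emph{uniqueness} of the $\true$-flagged tuple at level $k-1$: I would need to verify that the definition of $b(i,k,v)$ yields exactly one index $i$ (namely $i(t,k-1)$) marked true, so that the failure of the routing condition translates into the clean containment statement $s\notin V(G_{i(t,k-1)}^{k-1})$ rather than into a weaker index mismatch that would not let me invoke property~(i).
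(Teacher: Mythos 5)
Your proof is correct and follows essentially the same route as the paper: the upper bound via property~(iii) of the sparse cover (through $\diam(H_i^k)\leq 2^{k+6}$), and the lower bound by combining the minimality of $k(s,t)$ with property~(i) to get $d_H(s,t)>2^{k-1}$, then dividing by the spanning ratio $4$ from \cref{lem:planar-spanner}. Your careful unpacking of how the minimality of $k$ translates into $s\notin V(G_{i(t,k-1)}^{k-1})$ --- via the uniqueness of the $\true$-flagged tuple --- is exactly the detail the paper's terse phrase ``the minimality of $k$ and property~(i)'' leaves implicit.
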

\begin{proof}
By property (iii) of a sparse cover we get
$d(s,t)\leq \diam(G_i^k)\leq\diam(H_i^k)\leq2^{k+6}$. This proves the first
inequality.

Next, let $k > k_0$. The minimality of $k$ and 
property (i) of a sparse cover show that $d_H(s,t)\geq 2^{k-1}$
Finally, since $H$ is a $4$-spanner of $G$ we derive $d(s,t)\geq2^{k-3}$
and the claim follows.
\end{proof}
Once we have $k$ and $i$, we can route in $G_i^k$ using the 
labels $\lab_{k,i}(s)$ and $\lab_{k,i}(t)$ as well as the
broadcast function $\beta_s$. If $k=k_0$, we use $\sigma_\text{diam}$,
otherwise, we use $\sigma_\text{add}$ to compute the correct port.
It remains to show the correctness and to analyze the stretch factor. We start with the correctness.
Its proof is quite similar to the correctness proof of
$\sigma_\text{add}$.

\begin{figure}
\begin{center}
\includegraphics[scale=1]{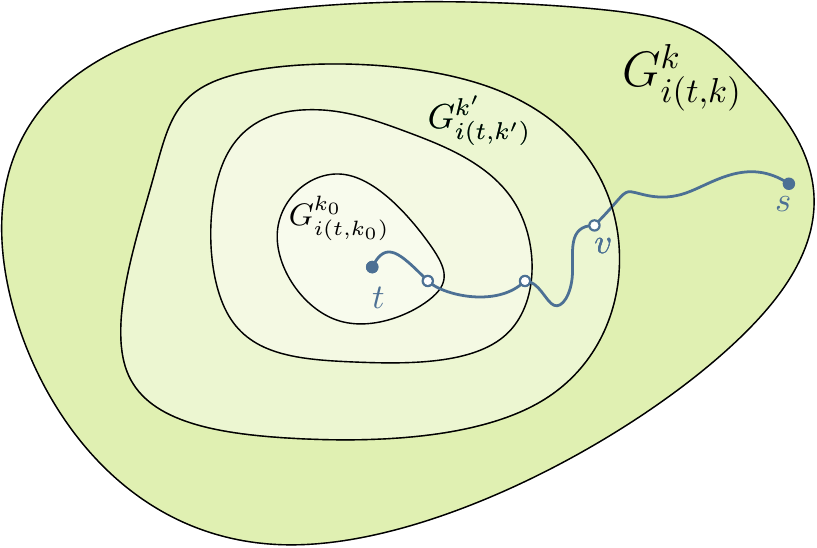}
\end{center}
\caption{It is $2^{k-3}\leq\diam\left(G_{i(t,k)}^k\right)\leq 2^{k+5}$.
We use the additive stretch routing scheme to route within
$G_{i(t,k)}^k$ until we find a vertex $v$ that is in $G_{i(t,k')}^{k'}$
for $k'<k$. This process continues until we find a vertex that is in $G_{i(t,k_0)}^{k_0}$,
here we use the low diameter routing scheme until we reach $t$.}
\label[figure]{fig:onion-of-t}
\end{figure}

\begin{lemma}
\label[lemma]{lem:correctness}
The routing scheme is correct.
\end{lemma}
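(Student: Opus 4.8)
The plan is to mirror the termination argument of \cref{lem:routing-correct}, but with a two-level progress measure whose primary coordinate is the scale $k(s,t)$ and whose secondary coordinate is the internal progress measure of whichever subroutine is invoked at that scale. Concretely, I would attach to each current vertex the pair $(k(s,t),\Phi(s,t))$, where $\Phi(s,t)$ is the triple $(\theta_c(s,t),p_{\id},h_p(s,t))$ from the proof of \cref{lem:routing-correct} when $k(s,t)>k_0$, and the analogous well-founded quantity for $\sigma_\text{diam}$ when $k(s,t)=k_0$; these pairs are compared lexicographically. First I would check that $k(s,t)$ is well defined: at the top scale $k=\lceil\log(4D)\rceil$, property (i) of the sparse cover together with $\diam(H)\le 4D$ forces $t$'s cluster $G_{i(t,k)}^k$ to contain every vertex, in particular $s$, so the minimization defining $k(s,t)$ ranges over a nonempty set.

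The heart of the argument is two claims about a single routing step $s\to v$. The first claim is that the scale never increases, i.e.\ $k(v,t)\le k(s,t)$. This holds because the step is produced by routing \emph{inside} the fixed cluster $G_{i(t,k)}^k$ with $k=k(s,t)$: the invoked subroutine ($\sigma_\text{add}$, whose hops lie on the tree $T_{p_0}\subseteq V(\mu(p_0))\subseteq V(G_{i(t,k)}^k)$ by \cref{lem:portal-remains-portal} and \cref{lem:tree-routing}, or $\sigma_\text{diam}$ by \cref{lem:diam-routing}) only forwards along edges of that cluster, so $v\in V(G_{i(t,k)}^k)$. Since $G_{i(t,k)}^k$ is exactly the cluster of $t$ at scale $k$, the vertex $v$ witnesses that scale $k$ already satisfies the condition defining $k(v,t)$, whence $k(v,t)\le k$. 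The second claim is that if the scale stays put, $k(v,t)=k(s,t)=k$, then the secondary measure strictly decreases. Here I would use that $G_{i(t,k)}^k$ depends only on $t$ and $k$, so at both $s$ and $v$ the combined scheme routes inside the \emph{same} graph towards the same target $t\in V(G_{i(t,k)}^k)$; consequently the step is an honest step of the underlying correct subroutine on $G_{i(t,k)}^k$, and the strict decrease of $\Phi$ is inherited verbatim from \cref{lem:routing-correct} (respectively from the termination of $\sigma_\text{diam}$).

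Given the two claims, correctness follows from well-foundedness. The lexicographic measure lives in $\N\times\N^3$ (or $\N\times\N^2$ at scale $k_0$), the scale is bounded below by $k_0$, and every step strictly decreases the pair, so the routing sequence is finite; moreover the scale can drop at most $\lceil\log(4D)\rceil-k_0\in O(\log D)$ times, and between consecutive drops the subroutine makes genuine progress, so the packet is never stuck. The process can therefore stop only when the secondary measure hits its minimum, which by correctness of the subroutines means the current vertex equals $t$. The hard part will be the first claim: one must verify carefully that a single hop of \emph{each} subroutine stays within the cluster it is invoked on, so that the cluster containing $s$ at scale $k$ still contains $v$, and then reconcile the two \emph{different} subroutine measures into one lexicographic order without letting the re-selection of $k$ at every vertex break monotonicity.
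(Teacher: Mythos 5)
Your proposal is correct and follows essentially the same argument as the paper: show that the scale $k$ never increases because the routing step stays inside $t$'s marked cluster $G_{i(t,k)}^k$, note that when $k$ is unchanged the unique true bit in $\lab(t)$ forces routing to continue in the \emph{same} subgraph so the subroutines' own termination guarantees (\cref{lem:diam-routing}, \cref{lem:routing-correct}) apply, and conclude by finiteness of the possible values of $k$. Your explicit lexicographic measure and the check that $k(s,t)$ is well defined at the top scale are just slightly more formal packagings of what the paper argues.
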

\begin{proof}
Let $s$ be the current vertex, $t$ the desired target 
vertex, and suppose that the routing scheme sends the 
packet to vertex $v$ from $s$. Moreover, let $k=k(s,t)$ 
and $i=i(s,t)$ be two indices that were used by the 
routing function to determine $v$.
Since the routing step from $s$ to $v$ 
takes place in the graph $G_i^k$, we know that $k$
is a potential candidate for $k(v,t)$. 
Thus, $k(v,t) \leq k$. If $k(v,t) < k$, we have made progress.
However, if $k(v,t) = k$, it must be that $i(s,t)=i(v,t)$, since 
we defined $\lab(t)$ such that
for each $k$, there is exactly one $i$ with $b(i,k,t)=\true$.
This means that if $k$ does not change, the routing 
continues in the subgraph $G_i^k$. We already
proved in \cref{lem:diam-routing} and \cref{lem:routing-correct}
that the underlying routing scheme for this task is 
correct. Hence, after a 
finite number of steps, we either reach $t$, or we decrease the 
value $k$. Since there is only a finite number of values for $k$,
correctness follows.
\end{proof}

The next lemma bounds the additive stretch as a function 
of $k$.
\begin{lemma}
\label[lemma]{lem:stretch}
There is a constant $c > 0$ with the following property:
let $s$ and $t$ be two vertices and let $k=k(s,t)$. Then, we have
$\delta(s,t)\leq d(s,t)+ c\eps \cdot 2^k$.
\end{lemma}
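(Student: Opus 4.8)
The plan is to prove the bound by induction on the level $k=k(s,t)$, decomposing the routing path into \emph{phases}. Within one phase the quantity $k(\cdot,t)$ stays equal to the current level and the packet travels inside a single cover subgraph; by \cref{lem:correctness} this quantity never increases, so the phases realise a strictly decreasing sequence of levels that ends at $t$. The induction hypothesis is the statement of the lemma for every pair whose level is smaller than $k$. Throughout I will rely on one auxiliary fact, whose proof is the real work: the cover subgraph $G_{i(t,k)}^k$ does not inflate distances to $t$ by more than $O(\eps 2^k)$, i.e.\ $d_{G_{i(t,k)}^k}(w,t)\le d(w,t)+O(\eps 2^k)$ for the vertices $w$ that the routing actually visits. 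I write $G=G_{i(t,k)}^k$ and $d_G$ for its induced shortest-path metric.

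For the base case $k=k_0$ the whole routing stays in $G$ and uses $\sigma_\text{diam}$, so \cref{lem:diam-routing} gives $\delta(s,t)\le(1+2^6\eps)\,d_G(s,t)$. By the auxiliary fact $d_G(s,t)\le d(s,t)+O(\eps 2^{k_0})$, and by \cref{obs:onion-of-t} we have $d(s,t)\le 2^{k_0+6}$; hence the multiplicative error $2^6\eps\,d_G(s,t)$ is itself $O(\eps 2^{k_0})$ and $\delta(s,t)\le d(s,t)+O(\eps 2^{k_0})$, as required.

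For the inductive step $k>k_0$, let $v$ be the first vertex on the routing path with $k(v,t)<k$. Such a $v$ exists and is reached strictly before $t$: every vertex within distance $2^{k-2}$ of $t$ already lies in the ball $B_H(t,2^k)$ and therefore in a lower cover, and the packet must enter this ball on its way to $t$. From $v$ the remaining routing is exactly a run of the scheme from $v$ at level $k(v,t)\le k-1$, so the induction hypothesis gives $\delta(v,t)\le d(v,t)+c\eps 2^{k-1}$. For the first phase, from $s$ to $v$ inside $G$ under $\sigma_\text{add}$, I reuse the computation in the proof of \cref{lem:bounding-stretch}: \cref{cor:distance-doesnt-decrease} applied inside $G$ gives $\sum|w_iw_{i+1}|_c\le\theta_c(s,t)-\theta_c(v,t)$ along the phase, and converting the scaled lengths back (paying $O(1)$ per step, with at most $n-1$ steps, exactly as in that proof) bounds the phase length by $d_G(s,t)-d_G(v,t)+O(\eps 2^k)$. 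Adding the two pieces, telescoping, and invoking the auxiliary fact to replace $d_G$ by $d$ on the traversed part yields $\delta(s,t)\le\big(d(s,t)-d(v,t)\big)+O(\eps 2^k)+\big(d(v,t)+c\eps 2^{k-1}\big)=d(s,t)+O(\eps 2^k)+\tfrac12 c\eps 2^k$. Choosing $c$ large enough that the hidden $O(\eps 2^k)$ is at most $\tfrac12 c\eps 2^k$ closes the induction.

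The main obstacle is the auxiliary fact, namely that $G$ is a near-isometry for distances to $t$ along the routed vertices. Near $t$ this is free: if $d(w,t)\le 2^{k-2}$ then every vertex of a shortest $\DG(V)$-path from $w$ to $t$ also lies within distance $2^{k-2}$ of $t$, hence inside $B_H(t,2^k)\subseteq V(G)$ by property~(i) of the sparse cover, so $d_G(w,t)=d(w,t)$; this already covers the drop vertex $v$ and the inner part of the first phase. The delicate regime is the outer part, where $s$ may sit at distance up to $2^{k+6}$ from $t$ and a shortest path need not stay in $V(G)$, so a priori $d_G(s,t)$ can exceed $d(s,t)$. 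Here I would argue that the length the routing spends before entering $B_H(t,2^{k-1})$ is charged against the genuine decrease of $d(\cdot,t)$ rather than of $d_G(\cdot,t)$, using \cref{obs:onion-of-t} to pin the scale $2^k=\Theta(d(s,t))$ in this regime and the bounded diameter and sparsity of the cover to keep the extra detour within the allotted $O(\eps 2^k)$ slack. Making this outer estimate precise --- so that the per-phase errors stay $O(\eps 2^{\kappa})$ at level $\kappa$ and the induction's geometric bookkeeping goes through --- is the crux of the whole argument.
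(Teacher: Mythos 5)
Your proposal has the same skeleton as the paper's proof: induction on the level $k$, a base case $k=k_0$ handled by \cref{lem:diam-routing}, and an inductive step organized around the first vertex $v$ on the routing path with $k(v,t)<k$, with the constant $c$ chosen at the end to absorb the per-level error. But your proof is genuinely incomplete, and you say so yourself: the ``auxiliary fact'' $d_G(s,t)\le d(s,t)+O(\eps 2^k)$ for $G=G^k_{i(t,k)}$ is exactly what the outer regime needs, and you never establish it --- the last paragraph only gestures at a charging argument. This is not a routine omission one could fill in mechanically. Your inner-ball argument works only when $4\,d(s,t)\le 2^k$, yet by \cref{obs:onion-of-t} a pair with $k(s,t)=k$ can have $d(s,t)$ as large as $2^{k+6}$, and properties (i)--(iii) of a sparse cover give in that range only $d_G(s,t)\le\diam(G^k_i)\le 2^{k+6}\le 2^9\, d(s,t)$, a multiplicative constant rather than an additive $O(\eps 2^k)$. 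Indeed, nothing in (i)--(iii) prevents $V(G^k_i)$ from containing $B_H(t,2^k)$ and $s$ while omitting the part of $\DG(V)$ through which the true shortest $s$-$t$ path runs (picture a ring-shaped cover set with $s$ and $t$ nearly antipodal and a ``chord'' of the unit disk graph left out); then $d_G(s,t)-d(s,t)=\Theta(2^k)$, and a packet routed inside $G$ really does travel the long way around. So the step you call the crux cannot be closed from the stated cover properties, and the induction does not go through as written.

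For comparison, the paper handles the first phase by a different, slicker regrouping, which you could adopt: since $t\in V(G^k_i)$ and $G^k_i$ is a subgraph of $\DG(V)$, continuing $\sigma_\textup{add}$ inside $G^k_i$ from $v$ reaches $t$ along a path of length $\delta'(v,t)\ge d(v,t)$; hence $\delta(s,t)=\delta'(s,v)+\delta(v,t)\le\delta'(s,v)+\delta'(v,t)+c\eps 2^{k-1}=\delta'(s,t)+c\eps 2^{k-1}$, and \cref{thm:additive-routing} is invoked just once, for the full $s$--$t$ route inside $G^k_i$. This also spares you the partial-telescoping estimate via \cref{cor:distance-doesnt-decrease}, which would additionally require a lower bound of the form $\theta_c(v,t)\ge c\cdot d_G(v,t)-O(1)$ that you assert but do not prove. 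Note, however, what the paper then writes: $\delta'(s,t)\le d(s,t)+c_0\eps 2^{k+5}$ with $d$ the distance in $\DG(V)$, whereas \cref{thm:additive-routing} applied to the unit disk graph $G^k_i$ only yields $\delta'(s,t)\le d_{G^k_i}(s,t)+c_0\eps\diam(G^k_i)$. In other words, the paper silently identifies $d_{G^k_i}(s,t)$ with $d(s,t)$ --- the same identification occurs in its base case --- which is precisely the auxiliary fact you got stuck on. Your diagnosis of where the difficulty lies is therefore accurate, and the paper's own text does not contain the missing argument either; closing it would require properties of the specific cover construction beyond (i)--(iii), not just a longer computation.
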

\begin{proof}
We  use induction on $k \geq k_0$. 
First, suppose that 
$k = k_0 = \lceil\log(8/\eps)\rceil$ and let 
$s, t$ be two vertices with $k(s, t) = k_0$. 
Let $G_i^k$ be
the graph that is used to determine the next vertex after $s$.
Since $k$ can only decrease during the routing, 
and since $k_0$ is the minimum possible value of $k$, we route 
within $G_i^k$, using
the low diameter routing scheme, until we reach
$t$. Moreover, by \cref{lem:diam-routing} and \cref{obs:onion-of-t}, 
and for $c\geq 2^{12}$ we get
\[
  \delta(s,t)\leq(1+\eps\cdot 2^6)d(s,t)\leq d(s,t) + \eps\cdot 2^{k+12}\leq d(s,t)+c\eps\cdot 2^k.
\]
Next, assume that $k > k_0$. Let $s, t$ be two 
vertices with $k(s,t) = k$, and assume that for every vertex $w$ with 
$k(w,t) < k$, we have $\delta(w,t)\leq d(w,t)+c\eps\cdot 2^{k(w,t)}$.
Let $G_i^k$ be the graph in which our scheme chooses to 
route the data packet from $s$ to the next node. Let $v$ be the 
first node
on the routing path from $s$ to $t$ for which $k(v,t)<k$, see \cref{fig:onion-of-t}. 
Moreover, let $\delta'(\cdot, \cdot)$ measure the length of the 
routing path within the subgraph $G_i^k$, using the additive stretch 
routing scheme. 
Next, by the definition of $k_0$ and since $k>k_0$ we get
$\diam(G_i^k)\geq d(s,t)\geq 2^{k-3}\geq1/\eps$ from
\cref{obs:onion-of-t}.
Furthermore, we know that $d(v,t)\leq\delta'(v,t)$, since $t$ is a 
vertex in $G_i^k$. Finally, we use the inductive hypothesis
as well as \cref{thm:additive-routing} to derive
\begin{align*}
\delta(s,t) &= \delta'(s,v)+\delta(v,t)\leq \delta'(s,v)+d(v,t)+
c \eps\cdot 2^{k(v,t)}\leq\delta'(s,v)+\delta'(v,t)+ c\eps\cdot 2^{k-1}\\
&=\delta'(s,t)+c\eps\cdot 2^{k-1}\leq d(s,t) +c_0\eps\cdot2^{k+5} +c \eps\cdot 2^{k-1}\leq d(s,t) + 
c\eps\cdot 2^k,
\end{align*}
for $c\geq c_02^6$, where $c_0$ is the constant from the $O$-Notation
of the stretch in \cref{thm:additive-routing}.  Hence, the claim follows.
\end{proof}
Finally, we can put everything together to obtain our main 
theorem.
\begin{theorem}
Let $\DG(V)$ be an $n$-vertex unit disk graph and $D$ its diameter. Furthermore, let $\eps > 0$.
There is a routing scheme with $O\left(\log D\log^3 n/\log\log n\right)$ label 
size\footnote{The constant in the O-Notation depends on $\eps$.} whose routing function
$\sigma$ achieves the stretch factor $1+\eps$.
\end{theorem}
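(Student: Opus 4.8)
\section{Proof plan for the main theorem}

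The plan is to assemble the three statements already established for the layered scheme---correctness (\cref{lem:correctness}), label size (\cref{lem:lab-size}), and the level-dependent additive stretch (\cref{lem:stretch})---and then to rescale the internal accuracy parameter so that the additive guarantee turns into the multiplicative guarantee $1+\eps$. Throughout I treat the desired stretch parameter $\eps$ as fixed and, as noted at the start of the section, assume without loss of generality that $\eps\leq 1$. I would run the entire construction of this section (the planar $4$-spanner, the family of sparse $2^k$-covers, and the induced sub-schemes) with an internal parameter $\eps_0=\eps/C$ in place of $\eps$, where $C=\max(8c,2^6)$ and $c$ is the universal constant furnished by \cref{lem:stretch}. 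Correctness is then immediate from \cref{lem:correctness}, which already shows that the routing sequence reaches $t$ after finitely many steps.

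For the label size I would simply invoke \cref{lem:lab-size} with the parameter $\eps_0$. Since $\eps_0=\Theta(\eps)$ and $\eps$ is fixed, the factors $\eps_0^{-1}$ and $\eps_0^{-4}$ are absorbed into the constant, and both summands of \cref{lem:lab-size} become $O(\log D\log^3 n/\log\log n)$, the additive $O(\log n)$ coming from the low-diameter labels being absorbed into this bound. This yields the claimed label size $O(\log D\log^3 n/\log\log n)$ with the constant depending on $\eps$.

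The heart of the proof is converting the additive bound of \cref{lem:stretch} into a multiplicative one. For any source $s$ and target $t$, set $k=k(s,t)$; then \cref{lem:stretch} gives $\delta(s,t)\leq d(s,t)+c\eps_0\cdot 2^{k}$. I would split into two cases according to whether $k>k_0$ or $k=k_0$. If $k>k_0$, then \cref{obs:onion-of-t} supplies the lower bound $d(s,t)\geq 2^{k-3}$, so $c\eps_0\cdot 2^{k}=8c\eps_0\cdot 2^{k-3}\leq 8c\eps_0\,d(s,t)$ and hence $\delta(s,t)\leq(1+8c\eps_0)d(s,t)\leq(1+\eps)d(s,t)$ by the choice of $C$. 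If $k=k_0$, the routing takes place entirely inside a single low-diameter subgraph $G_i^{k_0}$, so instead of the additive bound I would read off the genuinely multiplicative guarantee of $\sigma_\text{diam}$ recorded in the base case of \cref{lem:stretch} via \cref{lem:diam-routing}, namely $\delta(s,t)\leq(1+2^6\eps_0)d(s,t)\leq(1+\eps)d(s,t)$. Combining the two cases gives stretch $1+\eps$.

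The step I expect to be the main obstacle is exactly this base case $k=k_0$. There the additive term $c\eps_0\cdot 2^{k_0}$ is only a \emph{constant} (since $2^{k_0}=\Theta(1/\eps_0)$), so it cannot be charged against $d(s,t)$, which may be arbitrarily small; moreover the lower bound $d(s,t)\geq 2^{k-3}$ of \cref{obs:onion-of-t} is available only for $k>k_0$. The resolution is that at the base level the packet never leaves one sparse-cover cluster and is routed purely by $\sigma_\text{diam}$, whose stretch is multiplicative by \cref{lem:diam-routing}; this is precisely what the base case of \cref{lem:stretch} provides, and it is the ingredient that prevents a small-distance blow-up at the innermost level.
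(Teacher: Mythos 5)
Your proposal is correct and follows essentially the same route as the paper's proof: the paper also splits on $k=k_0$ (invoking \cref{lem:diam-routing} directly) versus $k>k_0$ (combining the lower bound $d(s,t)\geq 2^{k-3}$ from \cref{obs:onion-of-t} with the additive bound of \cref{lem:stretch}), and it handles the constant factor by observing that stretch $1+O(\eps)$ suffices, which is just the implicit form of your explicit rescaling $\eps_0=\eps/C$. The only cosmetic difference is that you spell out the rescaling and the absorption of the $\eps$-dependent factors in \cref{lem:lab-size} into the constant, which the paper leaves to the reader.
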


\begin{proof}
It remains to show the stretch factor. Here, it suffices to show that the stretch factor is $1+O(\eps)$.
Let $s$ and $t$ be two vertices and $k=k(s,t)$.
If $k=k_0$ the stretch factor immediately follows from \cref{lem:diam-routing}. Thus, assume
$k\neq k_0$. On the one hand
we know from \cref{obs:onion-of-t} that $2^{k-3}\leq d(s,t)$, and on the other hand we know from \cref{lem:stretch} that
$\delta(s,t)\leq d(s,t)+c\eps\cdot 2^k$. Plugging everything together we get
\begin{align*}
\delta(s,t)\leq d(s,t)+c\eps\cdot 2^k\leq d(s,t)+c2^3\eps\cdot d(s,t)=(1+c2^3\eps)d(s,t).
\end{align*}
This gives the desired stretch factor and the theorem follows.
\end{proof}

\section{Conclusion}

We presented the first efficient, compact, and headerless routing scheme for unit disk graphs. It achieves
near-optimal stretch $1+\eps$ and uses $O(\log D\log^3n/\log\log n)$ bits in the label. 
It would be interesting to see if this result
can be extended to disk graphs in general. If the radii of the disks are
unbounded, the decomposition of
Chan and Skrepetos cannot be applied immediately.
However, the case of bounded radii is still 
interesting, and even there, it is not clear 
how the method by Chan and Skrepetos generalizes.

Finally, let us compare our routing scheme to the known schemes. The model of the routing scheme
of Kaplan et al.~\cite{KaplanMuRoSe18} is very close to ours. The routing scheme can be implemented
using the fixed-port model. Moreover, they also use some kind of broadcasting function, since they
claim that neighborhood can be checked locally. The scheme was generalized to non-unit disk graphs with constant bounded
radii~\cite{willert2016routing}. Nevertheless, in unit 
disk graphs,
we achieve the same stretch factor and still have additional 
information of poly-logarithmic size.
The main advantage of our routing scheme is that we do not use any additional headers.
Therefore, whenever a data packet arrives at a node, it is not necessary to know what happened before
or where the packet came from. In the routing scheme of Kaplan et al., it happens that a data
packet visits a node more than once.

The routing scheme of Yan et al. \cite{yan2012compact} uses headers as well, but they are only computed in the first step
and do not change again. The idea
of their routing scheme is similar to ours: the graph is covered by $O(\log n)$ different trees. When
the routing starts, the labels of the source and the target are used to determine the identity of a
tree and an $O(\log n)$-bit label of the target within this tree. Finally, they completely forget the
original labels and route within this tree until they reach $t$. Their stretch is bounded by a constant.
Our routing scheme can also be turned into this model, but we have $O(\log D\log n)$ different trees
that cover the unit disk graph and the label of a vertex in one of the trees has size $O(\log^2n/\log\log n)$.
Nevertheless, we achieve the near optimal stretch $1+\eps$. Moreover, Yan et al. use the designer-port model
and thus, they can route within a tree using labels of size $O(\log n)$. But since nodes are contained in
more than one tree, there have to be lookup-tables for the port assignments. Their routing scheme can easily
be turned into the fixed-port model: the stretch would not change and the label size would increase to
$O(\log^3n/\log\log n)$. In conclusion, our routing scheme needs an $O(\log D)$-factor more in the label
size but achieves near-optimal stretch $1+\eps$ and the 
underlying routing model is specified more clearly.

\bibliographystyle{plainurl}
\bibliography{sources.bib}

\newcommand{\SortNoop}[1]{}
\begin{thebibliography}{10}

\bibitem{AbrahamGa11}
Ittai Abraham and Cyril Gavoille.
\newblock On approximate distance labels and routing schemes with affine
  stretch.
\newblock In {\em Proc. 25th Int. Symp. Dist. Comp. (DISC)}, pages 404--415,
  2011.

\bibitem{AbrahamGGM06}
Ittai Abraham, Cyril Gavoille, Andrew~V. Goldberg, and Dahlia Malkhi.
\newblock Routing in networks with low doubling dimension.
\newblock In {\em 26th {IEEE} International Conference on Distributed Computing
  Systems (ICDCS)}, page~75, 2006.

\bibitem{AwerbuchBNLiPe90}
Baruch Awerbuch, Amotz Bar-Noy, Nathan Linial, and David Peleg.
\newblock Improved routing strategies with succinct tables.
\newblock {\em J. Algorithms}, 11(3):307--341, 1990.

\bibitem{BanyassadyCKMRR17}
Bahareh Banyassady, Man{-}Kwun Chiu, Matias Korman, Wolfgang Mulzer,
  Andr{\'{e}} van Renssen, Marcel Roeloffzen, Paul Seiferth, Yannik Stein,
  Birgit Vogtenhuber, and Max Willert.
\newblock Routing in polygonal domains.
\newblock In {\em Proc. Annu. Internat. Sympos. Algorithms Comput. (ISAAC)},
  pages 10:1--10:13, 2017.

\bibitem{ChanSk19}
Timothy~M. Chan and Dimitrios Skrepetos.
\newblock Approximate shortest paths and distance oracles in weighted unit-disk
  graphs.
\newblock {\em J. of Computational Geometry}, 10(2):3--20, 2019.

\bibitem{Chechik13}
Shiri Chechik.
\newblock Compact routing schemes with improved stretch.
\newblock In {\em Proc. ACM Symp. Princ. Dist. Comp. (PODC)}, pages 33--41,
  2013.

\bibitem{clark1990unit}
Brent~N Clark, Charles~J Colbourn, and David~S Johnson.
\newblock Unit disk graphs.
\newblock {\em Discrete mathematics}, 86(1-3):165--177, 1990.

\bibitem{Cowen01}
Lenore~J Cowen.
\newblock Compact routing with minimum stretch.
\newblock {\em J. Algorithms}, 38(1):170--183, 2001.

\bibitem{EilamGaPe03}
Tamar Eilam, Cyril Gavoille, and David Peleg.
\newblock Compact routing schemes with low stretch factor.
\newblock {\em J. Algorithms}, 46(2):97--114, 2003.

\bibitem{FraigniaudGa01}
Pierre Fraigniaud and Cyril Gavoille.
\newblock Routing in trees.
\newblock In {\em Proc. 28th Internat. Colloq. Automata Lang. Program.
  (ICALP)}, pages 757--772, 2001.

\bibitem{fraigniaud2002space}
Pierre Fraigniaud and Cyril Gavoille.
\newblock A space lower bound for routing in trees.
\newblock In {\em Annual Symposium on Theoretical Aspects of Computer Science},
  pages 65--75. Springer, 2002.

\bibitem{gao2005well}
Jie Gao and Li~Zhang.
\newblock Well-separated pair decomposition for the unit-disk graph metric and
  its applications.
\newblock {\em SIAM J. Comput.}, 35(1):151--169, 2005.

\bibitem{GiordanoSt04}
Silvia Giordano and Ivan Stojmenovic.
\newblock Position based routing algorithms for ad hoc networks: A taxonomy.
\newblock In {\em Ad hoc wireless networking}, pages 103--136. Springer-Verlag,
  2004.

\bibitem{KaplanMuRoSe18}
Haim Kaplan, Wolfgang Mulzer, Liam Roditty, and Paul Seiferth.
\newblock Routing in unit disk graphs.
\newblock {\em Algorithmica}, 80(3):830--848, 2018.

\bibitem{kawarabayashi2013more}
Ken-ichi Kawarabayashi, Christian Sommer, and Mikkel Thorup.
\newblock More compact oracles for approximate distances in undirected planar
  graphs.
\newblock In {\em Proceedings of the twenty-fourth annual ACM-SIAM symposium on
  Discrete algorithms}, pages 550--563. Society for Industrial and Applied
  Mathematics, 2013.

\bibitem{li2002distributed}
Xiang-Yang Li, Gruia Calinescu, and Peng-Jun Wan.
\newblock Distributed construction of a planar spanner and routing for ad hoc
  wireless networks.
\newblock In {\em Proceedings. Twenty-First Annual Joint Conference of the IEEE
  Computer and Communications Societies}, volume~3, pages 1268--1277. IEEE,
  2002.

\bibitem{PelegUp89}
David Peleg and Eli Upfal.
\newblock A trade-off between space and efficiency for routing tables.
\newblock {\em J. ACM}, 36(3):510--530, 1989.

\bibitem{RodittyTo15}
Liam Roditty and Roei Tov.
\newblock New routing techniques and their applications.
\newblock In {\em Proc. ACM Symp. Princ. Dist. Comp. (PODC)}, pages 23--32,
  2015.

\bibitem{RodittyTo16}
Liam Roditty and Roei Tov.
\newblock Close to linear space routing schemes.
\newblock {\em Distributed Computing}, 29(1):65--74, 2016.

\bibitem{SantoroKh85}
Nicola Santoro and Ramez Khatib.
\newblock Labelling and implicit routing in networks.
\newblock {\em The Computer Journal}, 28(1):5--8, 1985.

\bibitem{Thorup04}
Mikkel Thorup.
\newblock Compact oracles for reachability and approximate distances in planar
  digraphs.
\newblock {\em J. ACM}, 51(6):993--1024, 2004.

\bibitem{ThorupZw01}
Mikkel Thorup and Uri Zwick.
\newblock Compact routing schemes.
\newblock In {\em Proc. 13th ACM Symp. Par. Algo. Arch. (SPAA)}, pages 1--10,
  2001.

\bibitem{willert2016routing}
Max Willert.
\newblock Routing schemes for disk graphs and polygons.
\newblock Master's thesis, Freie Universit{\"a}t Berlin, 2016.

\bibitem{yan2012compact}
Chenyu Yan, Yang Xiang, and Feodor~F Dragan.
\newblock Compact and low delay routing labeling scheme for unit disk graphs.
\newblock {\em Comput. Geom. Theory Appl.}, 45(7):305--325, 2012.

\end{thebibliography}

\end{document}